\begin{document}

\allowdisplaybreaks

\title{Enhancing  the Delay Performance of \\Dynamic Backpressure Algorithms}
\author{\authorblockN{Ying Cui,~\IEEEmembership{Member,~IEEE}, \ Edmund M. Yeh,~\IEEEmembership{Senior Member,~IEEE} and Ran Liu,~\IEEEmembership{Student Member,~IEEE}\thanks{Manuscript received March 26, 2014; revised December  19, 2014; accepted January 12, 2015; approved by IEEE/ACM TRANSACTIONS ON NETWORKING Editor L. Ying.  
Y. Cui  was supported by the National Science Foundation of China grant 61401272.  E. Yeh was supported by the National
Science Foundation grant CNS-1205562.
This paper was presented in part at IEEE Asilomar Conference on Signals, Systems and Computers, November 2013,  and at 
IEEE International Symposium on Information Theory (ISIT), July 2014.}\thanks{
Ying Cui is with the Department of  Electronic Engineering, Shanghai Jiao Tong University, China (e-mail: cuiying@sjtu.edu.cn).} \thanks{Edmund Yeh and Ran Liu are with the Department of Electrical  and Computer Engineering, Northeastern University, USA (e-mail: eyeh@ece.neu.edu; liu.ran1@husky.neu.edu).}
}}
\maketitle

\newtheorem{Thm}{Theorem}
\newtheorem{Lem}{Lemma}
\newtheorem{Cor}{Corollary}
\newtheorem{Def}{Definition}
\newtheorem{Exam}{Example}
\newtheorem{Alg}{Algorithm}
\newtheorem{Sch}{Scheme}
\newtheorem{Prob}{Problem}
\newtheorem{Rem}{Remark}
\newtheorem{Proof}{Proof}
\newtheorem{Asump}{Assumption}

\begin{abstract}
For general multi-hop queueing networks,  delay optimal network control has  unfortunately been an outstanding problem. The dynamic backpressure (BP) algorithm elegantly achieves throughput optimality, but does not yield good delay performance in general.  In this paper, we  obtain an asymptotically delay optimal control policy, which resembles the BP algorithm in basing resource allocation and routing on a backpressure calculation, but differs from the BP algorithm in the form of the backpressure calculation employed.  
The difference suggests a possible reason for the unsatisfactory delay performance of the BP algorithm, i.e., the myopic nature of the BP control. 
Motivated by this new connection, we introduce a new class of enhanced backpressure-based algorithms which incorporate  a general queue-dependent bias function into the backpressure term of the traditional BP algorithm to improve  delay performance. These enhanced algorithms exploit  queue state information beyond one hop.  We prove the throughput optimality and characterize the utility-delay tradeoff of the enhanced algorithms.  We further focus on two specific distributed algorithms within this class, which have demonstrably improved delay performance as well as acceptable implementation complexity. 
 \end{abstract}

\begin{IEEEkeywords}
dynamic backpressure algorithms, congestion control, delay optimal control, throughput optimal control, dynamic programming, Lyapunov drift.
\end{IEEEkeywords}

\section{Introduction}

With the significant increase in demand for real-time services, it is well recognized that  networks must be jointly optimized across the physical, medium access control (MAC), and network layers to support delay-sensitive applications. 
{\em Delay optimal}  network control for general multi-hop queueing networks, which seeks to minimize some function of average delay (or average queue size) by incorporating resource allocation and routing across different layers, has unfortunately been an outstanding problem for some time.  Often, even the basic structural properties of the delay optimal control policy are not known.   While dynamic programming represents a systematic approach for delay optimal control, there generally exist only numerical solutions\cite{CuiDelayTutorial:2012,CuiOFDMAUL:2010,Cuidynclustering2011,CuiDTC2012,RayRS-MIMO:2011}.  These solutions do not typically offer many design insights and are usually impractical for implementation in large-scale multi-hop networks, due to the curse of dimensionality \cite{Bertsekas:2007}.

A notable success in networking research is the formulation of the {\em throughput optimal} network control problem and its solution via the {\em dynamic backpressure (BP) algorithm} \cite{Tassiulas-Ephremides:1992-2,Georgiadis-Neely-Tassiulas:2006}. Throughput optimal control seeks to ensure the stability of  general multi-hop queueing networks (all queue sizes remain finite for all time) for any arrival rate vector within the network stability region.  The BP algorithm is obtained  using Lyapunov drift techniques. It incorporates resource allocation and routing across the physical, MAC, and network layers, and elegantly achieves throughput optimality via load balancing \cite{Tassiulas-Ephremides:1992-2, Georgiadis-Neely-Tassiulas:2006}.  The algorithm has also been combined with flow control in the transport layer to yield maximum network utility when the data arrival rate is outside the network stability region~\cite{Georgiadis-Neely-Tassiulas:2006}. 
One major shortcoming of the BP algorithm, however, is that it does not yield good delay performance in general.  
In routing packets, the BP algorithm typically explores all possible paths between sources and destinations (i.e., load balancing over the entire network), without explicitly considering delay performance. This extensive exploration is essential for maintaining stability when the network is heavily loaded. Under light or moderate loading, however,  packets may be sent over unnecessarily long routes, which leads to excessive delays.

For any arrival rate vector within the network stability region,  the delay optimal control minimizes  average  delay (average queue size), while the BP algorithm ensures finite average queue size and typically has good delay performance only under heavy load.  Therefore, two interesting questions are: (1) whether there is any subtle connection between the two network control solutions and (2) what accounts for the delay performance gap between them. 
Better understanding of these two questions may motivate the design of  enhanced BP algorithms with improved delay performance.  There are several potential challenges toward this direction. First, it is not clear how one should  improve the delay performance of the BP algorithm by  approximating the  delay optimal control in a tractable manner, in order to avoid the prohibitively high complexity of dynamic programming. Second, it is not clear how to maintain the desirable throughput optimality of the BP algorithm when the BP control structure is modified for improving the delay performance. In this paper, we shall address the above questions and challenges.

\subsection{Main Contributions}

We first study the connection  between delay optimal network control and the BP algorithm (throughput optimal network control).  Using dynamic programming  and Taylor's theorem, we obtain an asymptotically delay optimal control policy when the scheduling slot duration is small.  Surprisingly, we show that the asymptotically delay optimal control, obtained using dynamic programming, shares striking similarities with the 
BP algorithm, obtained using Lyapunov drift techniques.  Specifically, the two algorithms both base resource allocation and routing on a backpressure calculation, but differ in the form of the backpressure calculation employed.  In the BP algorithm, the backpressure of a link is derived from the differences of queue lengths at the two end nodes of the link.  Thus, the BP backpressure term reflects {\em local} queue state information (QSI).  In the asymptotically delay optimal control algorithm, the backpressure of a link is derived from the differences of the derivatives of the value function of the dynamic program at the two end nodes of the link.  
Since in general the value function depends on the  global QSI, the backpressure term for the asymptotically delay optimal control is a function of  the {\em global} QSI. This observation  suggests a possible reason for the poor delay performance of the BP algorithm, i.e., the {\em myopic} nature of the control, which relies only on one-hop queue size differences. 
 To the best of our knowledge, this is the first work which provides an analytical connection between the two network control solutions.

Motivated by the above connection, we  design enhanced BP algorithms with improved delay performance via the use of QSI beyond one hop. Specifically, we present a new class of {\em enhanced BP algorithms} which maintain a generalized notion of throughput optimality while exhibiting significantly improved delay performance, relative to the traditional BP algorithm. 
In lightly or moderately loaded networks, where the delay performance of the traditional BP algorithm is poor, the enhanced BP algorithms reduce average delay by (1) exploiting the margin between the arrival rate vector and the boundary of the network stability region, and (2) making use of QSI beyond one hop in a simple and flexible manner, via the incorporation of a QSI-dependent bias function into the backpressure calculation.  We propose two specific algorithms, named BPnxt and BPmin, within this class of enhanced BP algorithms. 
These two algorithms  promise to improve delay performance by using downstream QSI to clarify congestion patterns, while allowing for distributed implementation with manageable complexity.   BPnxt  has the same implementation complexity (in order) as the traditional BP algorithm.    BPmin  has an implementation complexity which is higher (in order) than that for the traditional BP algorithm but lower than that for other BP-based control algorithms with similar delay performance. 
 Next, the delay performance of both BPnxt and BPmin can be improved  further by incorporating an extra QSI-independent shortest path bias term into the backpressure calculation.  Finally, we present a new class of  {\em enhanced joint flow control and BP algorithms} for the case where the traffic arrival rate is outside the network stability region, and demonstrate their superior utility-delay performance tradeoff.

\subsection{Related Work}

A number of previous papers have focused on improving the delay performance of the traditional BP-based algorithms.  References \cite{Neely-Modiano-Rohrs:2005} and \cite{YingShakkottai08oncombSPDBPJNET} improve the delay by incorporating the shortest path (in terms of the number of hops) concept to avoid the extensive exploration of paths in the BP algorithm.   Specifically, in \cite{Neely-Modiano-Rohrs:2005},  
a  (constant) shortest path bias, parameterized by a per-link cost $B$, is added 
to the backpressure term so that nodes are inclined to route packets toward their destinations using shorter paths. The algorithm proposed in \cite{Neely-Modiano-Rohrs:2005} is called BPbias here.  
In
\cite{YingShakkottai08oncombSPDBPJNET}, a {joint traffic-splitting and
shortest-path-aided BP routing} algorithm, called BPSP here, is proposed, where the traffic splitting is parameterized by $K$.  A hop-queue structure is used. 
The algorithm incorporates the shortest path concept
by minimizing the average number of hops between sources and destinations,  using the hop-queue length difference in the backpressure term.  The traditional BP and BPbias algorithms require  $\mathcal O(N^2C)$ computational complexity for the backpressure calculation in each slot, where $N$ and $C$ are the number of nodes and the number of commodities in the network, respectively.   The BPSP algorithm, on the other hand, requires  $\mathcal O(N^4C)$ computational complexity for the backpressure calculation in each slot. 
As shown in Section~\ref{sec:simu}, one potential challenge for BPbias and BPSP is that their delay performance relies heavily on the choices for parameters $B$ and $K$. $B$ and $K$ must be selected for particular levels of traffic loading, which may be difficult to predict  beforehand in practical networks.  

 Reference \cite{AlresainiINFOCOM2012} improves the delay  of the traditional BP algorithm by introducing  redundant traffic and  a duplicate queue structure with finite buffers, to avoid delay increase  due to low queue occupancy.  The one-hop (finite and local) duplicate queue length difference is added to the  backpressure term, thereby capturing limited congestion information in the network. In \cite{BuiSrikant:2011}, a shadow queueing architecture is proposed  to improve the delay performance of the traditional BP algorithm by reducing the end-to-end queue length difference.  The one-hop  shadow queue length difference is used as the  backpressure term.  In \cite{ShroffdelaybasedDBP:2011}, a delay-based BP algorithm is proposed to improve the delay performance   using a new delay metric. The proposed algorithms in \cite{BuiSrikant:2011} and \cite{ShroffdelaybasedDBP:2011}, however, are designed for networks in which the routes of each flow are fixed before the arrival of packets. Reference  \cite{NeelyLMDBP11} 
improves the order of the utility-delay tradeoff  by forming a virtual backlog process and using the LIFO service discipline on top of the traditional BP-based algorithm.  The order improvement, however, hinges on the ability to drop a certain fraction of packets. In \cite{JavidiORCD1:2010,Neely09DIVBAR}, receiver diversity is used to improve the delay performance of the traditional BP algorithm for  networks with unreliable channel conditions. 

The motivation, method and design for our proposed class of enhanced BP algorithms are novel and differ significantly from the algorithms proposed in the above papers.  Our motivation stems primarily from the connection we establish between delay optimal network control and the throughput optimal BP algorithm.  Our proposed class of enhanced BP algorithms improve delay performance by incorporating a general QSI-dependent bias function into the backpressure calculation, thereby mitigating the myopia of the traditional BP algorithm.

\section{Network Model}\label{sec:system_model}

In this section, we establish the network model. 
Consider a (wireline or wireless) multi-hop network modeled by a directed graph $\mathcal
G=(\mathcal N, \mathcal L)$, where $\mathcal N$ and $\mathcal L$
denote sets of $N$ nodes and  $L$ directed links, respectively.
Time is slotted 
with slots indexed by $t\in \{0,1,2,\cdots\}$. The slot duration  is $\Delta> 0$ (sec). 
Data entering the network is associated with a particular
commodity. 
Let $\mathcal C$ represent the set of $C$ commodities in
the network.  Assume that there is one destination node $dest(c )$ for each commodity $c \in {\cal C}$.
Let $A^{(c)}_n(t)\Delta\geq 0$ denote the amount of exogenous
arrivals (bits) for commodity $c$ to node $n$ during slot $t$, assumed to enter the transmission buffer at 
the end of slot $t$.  Assume that 
 $A^{(c)}_n(t)\in[0, A^{(c)}_{n,\max}]$ is i.i.d. with respect to (w.r.t.) $t$ with arrival rate
$\lambda^{(c)}_n\triangleq \mathbb E[A^{(c)}_n(t)] $ (bit/sec), where $ A^{(c)}_{n,\max}<\infty$. 
 In addition, assume that processes  $\{A^{(c)}_n(t)\}$ for different node-commodity pairs are mutually independent. Let $\mathbf A(t) \triangleq (A^{(c)}_n(t))$ and $\boldsymbol \lambda  \triangleq (\lambda^{(c)}_n)$.

Let $S(t)\in \mathcal S$ denote the topology state of the
network in slot $t$, where $\mathcal S$ is the finite topology state space.
The topology state $S(t)$ can be used to model channel fading in wireless networks. 
Assume $S(t)$ is i.i.d. w.r.t. $t$. Let $I(t)\in \mathcal I$ 
denote the resource allocation action at slot $t$, where $\mathcal I$ is  
the bounded resource allocation action space.
The resource allocation action $I(t)$ may reflect  a set of power allocations or a set of conflict constraints in wireless networks.  
Let $R_{ab}\left(S(t),
I(t)\right)\geq0 $ denote the transmission rate (bit/sec) over link $(a,b)$ under   $S(t)$ and  $I(t)$, where $R_{ab}\left(S(t) , I(t)\right) =0$ if $(a,b)\not \in \mathcal L$. 
Assume $
R_{ab}\left(S(t) , I(t)\right) \leq R_{\max}$ for all $(a,b)\in \mathcal L$, $S(t)\in \mathcal S$ and  $I (t)\in \mathcal I$, where $R_{\max}<\infty$ is
an upper bound on the maximum transmission rate over any link. 
 Let $\nu^{(c)}_{ab}(t)\Delta\geq 0$ 
represent the amount of commodity $c$ data (bits) delivered over link $(a,b)$ during slot $t$, satisfying:
\begin{align}
&\sum_{c\in \mathcal C} \nu^{(c)}_{ab}(t)\leq R_{ab}\big(S(t), I(t)\big),\
\forall (a,b)\in \mathcal L, \ c\in \mathcal C\label{eqn:rout_cost_sum}\\
&\nu^{(c)}_{ab}(t)=0, \ \forall  (a,b)\not\in \mathcal
L^{(c)}, \ c\in \mathcal C\label{eqn:rout_cost_non_neg}
\end{align}
where $\mathcal L^{(c)}$ is the set of $L^{(c)}$ links that are allowed to
transmit commodity $c$ data.  Let $\mathcal R$ denote the bounded routing action space, which is the bounded set of non-negative $\boldsymbol \nu(t)\triangleq (\nu^{(c)}_{ab}(t))$ satisfying \eqref{eqn:rout_cost_sum} and \eqref{eqn:rout_cost_non_neg}, for all  $S(t)\in \mathcal S$ and  $I (t)\in \mathcal I$.

Data corresponding to different commodities are queued separately at each node, in 
buffers of infinite size.  Let $U^{(c)}_n(t)\geq 0$ denote the amount of commodity $c$ data (bits) at
node $n$ at the beginning of slot $t$  in the network layer.  Let $\mathbf
U(t)\triangleq(U^{(c)}_n(t)) \in \mathcal U$ denote  the network layer queue state information (QSI) at the beginning of slot $t$, where $\mathcal U$ denotes the nonnegative QSI state space.   Any data successfully delivered to its destination is assumed to exit the network layer. Thus, for each commodity $c\in \mathcal C$,  we set $U^{(c)}_n(t)=0$ for all $t$, if node $n$ is the destination node of commodity $c$.
For each commodity $c\in \mathcal C$ and node $n\in \mathcal N$, $n\neq dest(c)$, the  network queue dynamics  satisfies:\footnote{Due to the constraint in \eqref{eqn:rout_cost_non_neg}, the summations in \eqref{eqn:queue_dyn-mdp} can be written as summations over all node indices. Note that we assume exogenous arrivals during a slot arrive into the transmission buffer at the end of the slot.}
\begin{align}
&U^{(c)}_n(t+1)\label{eqn:queue_dyn-mdp}\\
=&
U^{(c)}_n(t)-\sum_{b\in \mathcal N}\nu^{(c)}_{nb}(t)\Delta+A^{(c)}_n(t)\Delta
+\sum_{a\in \mathcal N}\nu^{(c)}_{an}(t)\Delta\nonumber
\end{align}
for all $t$. 
Note that $\sum_{b\in \mathcal N}\nu^{(c)}_{nb}(t)\Delta$  bits  are removed from the buffer at node $n$ for commodity $c$ before $A^{(c)}_n(t)\Delta$  and $\sum_{a\in \mathcal N}\nu^{(c)}_{an}(t)\Delta$ bits arrive. Thus, for all $t$, we require: 
\begin{align}
\sum_{b\in \mathcal N}\nu^{(c)}_{nb}(t)\Delta\leq U^{(c)}_n(t), \ \forall n\in \mathcal N, \ c\in \mathcal C.  \label{eqn:bound}
\end{align}

In the following, we introduce some basic definitions.


\begin{Def}  A feasible stationary policy $\omega
: \mathcal S \times  \mathcal U\to \mathcal I\times \mathcal R$ is a mapping from the system (topology and QSI) state space to the system (resource allocation and routing) action space.  Given system state $(s,\mathbf u)\in \mathcal S \times  \mathcal U$, $\omega$ determines the action $( I, \boldsymbol \nu
 )=\omega(s,\mathbf u)\in  \mathcal I\times \mathcal R$, where  $I$ and $\boldsymbol \nu$ satisfy $I \in \mathcal I$ and 
\begin{align}
&\sum_{c\in \mathcal C} \nu^{(c)}_{ab}\leq R_{ab}\big(s, I\big),\
\forall (a,b)\in \mathcal L,\ c\in \mathcal C\label{eqn:rout_cost_sum-bellman}\\
&\nu^{(c)}_{ab}=0, \ \forall  (a,b)\not\in \mathcal
L^{(c)}, \ c\in \mathcal C\label{eqn:rout_cost_non_neg-bellman}\\
&\sum_{b\in \mathcal N}\nu^{(c)}_{nb}\Delta\leq u^{(c)}_n,\ \forall n\in \mathcal N, \ c\in \mathcal C.\label{eqn:bound-bellman}
\end{align}\label{Def:policy}
\end{Def}

Next, we define the notions of network stability and the network stability region. 

\begin{Def}[Network Layer Queue Stability] 
A single network layer queue is  stable if $ \limsup_{t\rightarrow +\infty}\frac{1}{t}\sum_{\tau=0}^{t-1}
\mathbb E [U_n^{(c)}(\tau)]$ $<\infty.$ A network of network layer queues is  stable if all individual network layer queues of the network are  stable.
\end{Def}

\begin{Def}[Network Stability Region] The network stability region $\Lambda$   is the closure of the set of all arrival rates $\boldsymbol \lambda$ for which all network layer queues  can be stabilized 
by some  feasible policy.\footnote{Here, the feasible policy is not required to be stationary.}
\end{Def}

\section{Connection between Throughput Optimal BP algorithm and Delay Optimal Control}

In this section, we consider the case where  $\boldsymbol \lambda  \in \text{int}(\Lambda)$ and study the connection between  the throughput optimal BP algorithm and the delay optimal control.

\subsection{Delay Optimal Control Problem}\label{subsec:delay-opt-problem}

Under a given stationary policy $\omega
$, the induced random process $\{(S(t),\mathbf U(t))\}$ is a controlled markov chain with transition probabilities given by:
\begin{align}
&P_{(s,\mathbf u),(s',\mathbf u')}(I, \boldsymbol \nu
)\nonumber\\
\triangleq &\Pr[(S(t+1),\mathbf U(t+1))=(s',\mathbf u')|(S(t),\mathbf U(t))=(s,\mathbf u), I, \boldsymbol \nu
] \nonumber\\
=&\Pr[S(t+1)=s']P_{(s,\mathbf u),\mathbf u'}(I, \boldsymbol \nu
)\label{eqn:tran-prob}
\end{align}
where $P_{(s,\mathbf u),\mathbf u'}(I, \boldsymbol \nu
)\triangleq \Pr[\mathbf U(t+1)=\mathbf u'|(S(t),\mathbf U(t))=(s,\mathbf u), I, \boldsymbol \nu
]$.
Note that $P_{(s,\mathbf u),(s',\mathbf u')}(I, \boldsymbol \nu
)$  denotes the probability  that the next state will be $(s',\mathbf u')\in \mathcal S \times  \mathcal U$ given that the current state is $(s,\mathbf u)\in \mathcal S \times  \mathcal U$  and the control action is $( I, \boldsymbol \nu
)=\omega(s,\mathbf u)$.  In addition, $P_{(s,\mathbf u),\mathbf u'}(I, \boldsymbol \nu
)=\sum_{s'\in \mathcal S} P_{(s,\mathbf u),(s',\mathbf u')}(I, \boldsymbol \nu
)$.

We now formulate the delay optimal control problem.
\begin{Prob} [Delay Optimal Control Problem]
\begin{align}
\min_{\omega
}\limsup_{t\to\infty}\frac{1}{t}\sum_{\tau=0}^{t-1}\sum_{n\in \mathcal N,c\in \mathcal C} \mathbb
E[U^{(c)}_n(\tau)]\label{eqn:delay-opt-prob}
\end{align}
where $(S(0),\mathbf U(0))\in \mathcal S \times  \mathcal U$  and $\omega$ is a feasible stationary policy defined  in Definition~\ref{Def:policy}.  
Note that the expectation is taken w.r.t. the probability measure induced by the control policy $\omega
$. Problem~\ref{Prob:opt} is an infinite horizon average cost problem  \cite{Bertsekas:2007}.  In the remainder of this section, we restrict our attention to stabilizing policies.\label{Prob:opt}
\label{Prob:opt}
\end{Prob}

\subsection{Delay Optimal Policy}\label{subsec:delay-opt-policy}
In Sections~\ref{subsec:delay-opt-policy}, \ref{subsec:asymp-delay-opt} and \ref{subset:connection},  for technical tractability, we assume $A^{(c)}_n(t)$ and $\nu^{(c)}_n(t)$ both take on nonnegative rational values for  all $t$, $\Delta$ is a nonnegative rational number, and $\mathcal I$ and $\mathcal R$ are finite.  These assumptions ensure that 
$  \mathcal U$ is countably infinite. 
Under certain conditions (specified below), a delay optimal policy $\omega
^*$ can be obtained by solving the following Bellman equation.

\begin{Lem} [Bellman Equation]
Assume\footnote{Assumption 4. 6. 2 and Assumption 4.6.3 in \cite{Bertsekas:2007} provide the conditions for the existence of $d$ and $V(\cdot)$.} that a scalar $d$ and a real-valued function $V(\cdot)$ solve the   Bellman equation:
\begin{align}
&d +V(\mathbf u)=\nonumber\\
&\sum_{s\in \mathcal S}\Pr[S=s]\left[\min_{I, \boldsymbol \nu
} \left\{\sum_{\substack{n\in \mathcal N\\c\in \mathcal C}}u^{(c)}_n+\sum_{\mathbf u'\in \mathcal U}P_{(s,\mathbf u),\mathbf u'}(I, \boldsymbol \nu
)V(\mathbf u')
\right\}\right]\label{eqn:Bellman}
\end{align}
for all $\mathbf u\in \mathcal U$ and furthermore $V(\cdot)$  satisfies:  
\begin{align}
\lim_{t\to \infty}\frac{1}{t}\mathbb E[V(\mathbf U(t)|(S(0),\mathbf U(0))=(s,\mathbf u),\omega
]=0\label{eqn:Bellman-bound}
\end{align}
for all $\omega
$ and $(s,\mathbf u)\in \mathcal S \times  \mathcal U$, where $I$ and $\boldsymbol \nu$ in \eqref{eqn:Bellman} satisfy $I \in \mathcal I$, \eqref{eqn:rout_cost_sum-bellman}, \eqref{eqn:rout_cost_non_neg-bellman} and \eqref{eqn:bound-bellman}.  
Then, $d=\min_{\omega
}\limsup_{t\to\infty}\frac{1}{t}\sum_{\tau=0}^{t-1}\sum_{n\in \mathcal N,c\in \mathcal C} \mathbb
E[U^{(c)}_n(\tau)]$ is the optimal value to Problem~\ref{Prob:opt} for all initial $(S(0),\mathbf U(0))\in \mathcal S\times\mathcal U$ and $V(\cdot)$ is called the  value  function (potential function). Furthermore, if\footnote{Note that $V(\mathbf u)$ captures the average delay cost starting from $\mathbf u$.} 
\begin{align}
\omega^*(s,\mathbf u)=\arg\min_{I, \boldsymbol \nu
} \sum_{\mathbf u'\in \mathcal U}P_{(s,\mathbf u),\mathbf u'}(I, \boldsymbol \nu
)V(\mathbf u')\label{eqn:bellman-opt-policy}
\end{align}
for all $(s,\mathbf u)\in \mathcal S \times  \mathcal U$,  where $I$ and $\boldsymbol \nu$ in \eqref{eqn:bellman-opt-policy} satisfy $I \in \mathcal I$, \eqref{eqn:rout_cost_sum-bellman}, \eqref{eqn:rout_cost_non_neg-bellman} and \eqref{eqn:bound-bellman}, then $\omega^*$ is the delay optimal policy achieving the optimal value $d$.\label{Lem:Bellman}
\end{Lem}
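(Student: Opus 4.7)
\begin{Proof}[Proof Plan]
The plan is to establish the classical verification argument for average-cost dynamic programming on the countably infinite state space $\mathcal{S}\times\mathcal{U}$. The idea is to use the Bellman equation \eqref{eqn:Bellman} to relate the one-step cost to the difference $V(\mathbf{U}(t+1))-V(\mathbf{U}(t))$, then telescope over $t$ slots, and finally exploit the growth condition \eqref{eqn:Bellman-bound} to show that the potential terms vanish in the long-run average.

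First I would establish a lower bound. Fix any feasible stationary policy $\omega$ and let $(I_\omega,\boldsymbol\nu_\omega)=\omega(s,\mathbf{u})$. Because the right-hand side of \eqref{eqn:Bellman} involves a minimization over all feasible $(I,\boldsymbol\nu)$, substituting $(I_\omega,\boldsymbol\nu_\omega)$ yields the pointwise inequality
\begin{align}
d+V(\mathbf{u})\le \sum_{\substack{n\in\mathcal N\\ c\in\mathcal C}} u^{(c)}_n+\mathbb E\bigl[V(\mathbf U(t+1))\,\big|\,\mathbf U(t)=\mathbf u,\omega\bigr].\nonumber
\end{align}
Taking total expectation under $\omega$, summing from $\tau=0$ to $t-1$, and dividing by $t$ produces a telescoping sum
\begin{align}
d\le \frac{1}{t}\sum_{\tau=0}^{t-1}\mathbb E\Bigl[\sum_{n,c} U^{(c)}_n(\tau)\Bigr]+\frac{1}{t}\mathbb E[V(\mathbf U(t))]-\frac{1}{t}V(\mathbf U(0)).\nonumber
\end{align}
Letting $t\to\infty$ and invoking \eqref{eqn:Bellman-bound} to kill the middle term (and the boundary term trivially), one obtains $d\le \limsup_{t\to\infty}\frac{1}{t}\sum_{\tau=0}^{t-1}\mathbb E[\sum_{n,c} U^{(c)}_n(\tau)]$ for every feasible stationary $\omega$.

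Next I would prove the matching upper bound using $\omega^*$. For the greedy policy defined by \eqref{eqn:bellman-opt-policy}, the pointwise inequality above becomes an equality because $\omega^*(s,\mathbf u)$ is exactly the argmin appearing in \eqref{eqn:Bellman}. Repeating the telescoping argument under $\omega^*$ therefore gives
\begin{align}
d=\frac{1}{t}\sum_{\tau=0}^{t-1}\mathbb E\Bigl[\sum_{n,c} U^{(c)}_n(\tau)\Bigr]+\frac{1}{t}\mathbb E[V(\mathbf U(t))]-\frac{1}{t}V(\mathbf U(0)),\nonumber
\end{align}
and passing to the limit using \eqref{eqn:Bellman-bound} with $\omega=\omega^*$ yields $d=\limsup_{t\to\infty}\frac{1}{t}\sum_{\tau=0}^{t-1}\mathbb E[\sum_{n,c} U^{(c)}_n(\tau)]$ under $\omega^*$. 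Combined with the lower bound, this shows simultaneously that $d$ is the optimal value of Problem~\ref{Prob:opt} (independent of the initial state) and that $\omega^*$ attains it. Since the argument applies to arbitrary stationary $\omega$ and $\omega^*$ is itself stationary and feasible, the restriction to stationary policies stated at the end of Section~\ref{subsec:delay-opt-problem} suffices.

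The main obstacle I anticipate is technical rather than conceptual: because $\mathcal U$ is only countably infinite and $V$ need not be bounded, care is required to justify taking expectations and exchanging limits. Specifically, one must ensure that $\mathbb E[V(\mathbf U(t))]$ is finite for each $t$ under any stabilizing $\omega$ so that the telescoping identity is valid, and that the division-by-$t$ step is legitimate — this is precisely the role played by the growth assumption \eqref{eqn:Bellman-bound}, which is imposed uniformly in $\omega$. This growth condition is also the point where the technical assumptions stated at the start of Section~\ref{subsec:delay-opt-policy} (rational arrivals/rates and finite action spaces, giving a countable $\mathcal U$) are used, since they make the standard existence results (Assumptions 4.6.2 and 4.6.3 in \cite{Bertsekas:2007}) applicable and ensure that the sums over $\mathbf u'\in\mathcal U$ in \eqref{eqn:Bellman} are well defined.
\end{Proof}
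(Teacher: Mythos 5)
Your proposal is correct, and it follows a genuinely different route from the paper. The paper's Appendix A does not prove the verification argument from scratch; instead it performs a change of variable, defining $h(s,\mathbf u)\triangleq\min_{I,\boldsymbol\nu}\{\sum_{n,c}u^{(c)}_n+\sum_{\mathbf u'}P_{(s,\mathbf u),\mathbf u'}(I,\boldsymbol\nu)V(\mathbf u')\}-d$ so that $V(\mathbf u)=\sum_s\Pr[S=s]\,h(s,\mathbf u)$, and shows that $h$ satisfies the full-state Bellman equation $d+h(s,\mathbf u)=\min_{I,\boldsymbol\nu}\{\sum_{n,c}u^{(c)}_n+\sum_{s',\mathbf u'}P_{(s,\mathbf u),(s',\mathbf u')}(I,\boldsymbol\nu)h(s',\mathbf u')\}$ together with the growth condition $\lim_t\frac{1}{t}\mathbb E[h(S(t),\mathbf U(t))]=0$. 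This puts the problem exactly in the form of Proposition~4.6.1 in Bertsekas, which is then invoked as a black box. The purpose of the change of variable is to bridge the gap between the paper's Bellman equation (where the value function $V(\mathbf u)$ depends only on the queue state and the topology state is averaged out, with the minimization inside the sum over $s$) and the standard statement (where the potential function depends on the full state $(s,\mathbf u)$). Your argument instead proves the verification lemma directly: you substitute an arbitrary policy into the right-hand side to get the drift inequality, telescope, and use the growth condition to kill the boundary terms, then repeat with equality for $\omega^*$. This is essentially a self-contained re-derivation of Bertsekas 4.6.1 specialized to this chain, which is more elementary and arguably more transparent, whereas the paper's reduction is shorter but defers the telescoping machinery to the citation. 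One small point of care in your version: the minimum in \eqref{eqn:Bellman} sits inside the sum over $s$, so the substituted action $\omega(s,\mathbf u)$ varies with $s$; the inequality still follows because you apply the pointwise minimality of each inner term separately before averaging over $\Pr[S=s]$, and the fact that $S(t)$ is i.i.d.\ and hence independent of $\mathbf U(t)$ is what lets you identify the resulting average with $\mathbb E[V(\mathbf U(t+1))\mid\mathbf U(t)=\mathbf u,\omega]$. Making that independence step explicit would tighten the write-up, but the logic is sound.
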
 
\begin{proof} Please refer to Appendix A for the proof.
\end{proof}

The fact that the optimal value $d$ does not change with $(S(0),\mathbf U(0))$ implies that the optimal policy $\omega^*$ is a unichain policy\cite{Bertsekas:2007}. From Lemma~\ref{Lem:Bellman}, we can see that  $\omega
^*$ given by  \eqref{eqn:bellman-opt-policy} depends on  $\mathbf u$ through the value function $V(\cdot)$. Obtaining $V(\cdot)$ involves solving the Bellman equation in \eqref{eqn:Bellman}  for all $\mathbf u\in \mathcal U$, which does not admit a  closed-form solution in general. Brute force numerical solutions 
such as value iteration and policy iteration 
are not practical for  multi-hop queueing networks \cite{CuiDelayTutorial:2012,Bertsekas:2007} and  do not yield many design insights.

\subsection{Asymptotically Delay Optimal Policy}\label{subsec:asymp-delay-opt}
In Sections~\ref{subsec:asymp-delay-opt} and \ref{subset:connection}, for technical tractability, we suppose that for all  slot durations $\Delta>0$, a scalar $d$ and a real-valued function $V(\cdot)$ exist in  Lemma~\ref{Lem:Bellman}, and $V(\cdot)$   is twice differentiable.  Let $V_{n,(c)}'(\cdot)\triangleq \frac{\partial V}{\partial u_n^{(c)}}(\cdot)$. 

It is difficult to directly study the features of delay optimal control policy for a general multi-hop network by investigating the properties of the value function $V(\cdot)$. We therefore study the asymptotic features of the delay optimal control. Specifically, using Taylor's theorem for $V(\cdot)$\cite{CuitwohopIT2015}, we can show that minimizing the expected value function in the delay optimal policy (cf. \eqref{eqn:bellman-opt-policy}) is the same as maximizing the gradient descent of the value function (cf.~\eqref{eqn:prob-asymp}), as $\Delta\to 0$ (Please see Appendix B for the proof). 
Consider the policy 
 \begin{align}
&\omega^{\dagger}(s,\mathbf u)\nonumber\\
=&\arg\max_{I, \boldsymbol \nu
}\sum_{(a,b)\in \mathcal L}\sum_{c\in \mathcal C}\nu_{ab}^{(c)}\left(V_{a,(c)}'(\mathbf u)- V_{b,(c)}'(\mathbf u)\right)\label{eqn:prob-asymp}
\end{align}
for all $(s,\mathbf u)\in \mathcal S \times  \mathcal U$, where $I$ and $\boldsymbol \nu$ in \eqref{eqn:prob-asymp} satisfy $I \in \mathcal I$, \eqref{eqn:rout_cost_sum-bellman}, \eqref{eqn:rout_cost_non_neg-bellman} and \eqref{eqn:bound-bellman}. Suppose that policy $\omega^{\dagger}$ is a unichain policy. Denote the average queue length under $\omega^{\dagger}$ by $d^{\dagger}=\limsup_{t\to\infty}\frac{1}{t}\sum_{\tau=0}^{t-1}\sum_{n\in \mathcal N,c\in \mathcal C} \mathbb
E[U^{(c)}_n(\tau)]$, where the expectation is taken w.r.t. the probability measure induced by $\omega^{\dagger}$. 
Note that $\omega^{\dagger}$,  $d^{\dagger}$, $\omega^*$ and $d$ are all functions of the slot duration $\Delta$, as $d$ and $V(\cdot)$ in  Lemma~\ref{Lem:Bellman}  depend on $\Delta$ through the transition probabilities. Therefore, we also write $d^{\dagger}$ and $d$ as $d^{\dagger}(\Delta)$ and $d(\Delta)$, respectively, when studying the scaling behavior w.r.t. $\Delta$. 

Next, we show that   $\omega^{\dagger}$ is asymptotically delay optimal.

\begin{Lem} [Asymptotically Delay Optimal Policy] For all slot durations $\Delta>0$, suppose policy $\omega^{\dagger}$ is a unichain policy.   Then, we have $ d^{\dagger}(\Delta)-d(\Delta)=o (\Delta)$  as  $\Delta\to 0$.\label{Lem:asym-opt-policy}
\end{Lem}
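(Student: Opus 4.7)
The plan is to combine a first-order Taylor expansion of $V(\cdot)$ with a drift-plus-cost (verification) argument in the spirit of Bertsekas, Vol.~II. From the queue dynamics \eqref{eqn:queue_dyn-mdp}, each increment $\mathbf{U}(t+1)-\mathbf{U}(t)$ is $O(\Delta)$ in magnitude, uniformly in the control, since arrivals and routing rates are uniformly bounded by $A^{(c)}_{n,\max}$ and $R_{\max}$. Applying Taylor's theorem to the twice-differentiable $V(\cdot)$ at $\mathbf{u}$, taking conditional expectation over $\mathbf{A}(t)$ and the next topology state, substituting \eqref{eqn:queue_dyn-mdp}, and regrouping routing variables by link, I expect an expansion of the form
\begin{align*}
&\mathbb{E}\bigl[V(\mathbf{U}(t+1))\mid\mathbf{u},s,I,\boldsymbol{\nu}\bigr] = V(\mathbf{u}) + \Delta\sum_{n,c} V'_{n,(c)}(\mathbf{u})\lambda_n^{(c)} \\
&\quad - \Delta\sum_{(a,b)\in\mathcal L}\sum_{c\in\mathcal C}\nu_{ab}^{(c)}\bigl(V'_{a,(c)}(\mathbf{u})-V'_{b,(c)}(\mathbf{u})\bigr) + o(\Delta).
\end{align*}
Only the third term depends on $(I,\boldsymbol\nu)$, so minimizing the left-hand side, as \eqref{eqn:bellman-opt-policy} requires, is equivalent modulo an additive $o(\Delta)$ to maximizing the drift objective in \eqref{eqn:prob-asymp}; hence $\omega^{\dagger}$ is an $o(\Delta)$-approximate Bellman minimizer at every state.

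The next step converts this pointwise near-optimality into a long-run-average bound. Averaging \eqref{eqn:Bellman} over $S$ and using the $o(\Delta)$ gap just established, under $\omega^{\dagger}$ one obtains
\begin{align*}
\sum_{s\in\mathcal S}\Pr[S=s]\,\mathbb{E}\bigl[V(\mathbf{U}(t+1))\mid\mathbf{u},s,\omega^{\dagger}\bigr] \le d + V(\mathbf{u}) - \sum_{n,c} u_n^{(c)} + o(\Delta).
\end{align*}
Rearranging so that $\sum_{n,c} u_n^{(c)}$ is on the left, taking expectation along the trajectory under $\omega^{\dagger}$, summing over $\tau=0,\dots,t-1$, telescoping the $V$ terms, dividing by $t$, and invoking \eqref{eqn:Bellman-bound} to annihilate the endpoint $\mathbb{E}[V(\mathbf{U}(t))]/t$, I get $d^{\dagger}(\Delta)\le d(\Delta)+o(\Delta)$. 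The reverse inequality $d(\Delta)\le d^{\dagger}(\Delta)$ is automatic from the optimality of $d$, and the two bounds together give the claim.

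The principal technical hurdle is securing the $o(\Delta)$ Taylor remainder \emph{uniformly} over $\mathbf{u}\in\mathcal U$. Because $\mathcal U$ is countably infinite the second derivatives of $V$ need not be globally bounded, and a naive estimate yields a remainder of order $\|V''(\mathbf{u})\|\Delta^2$ that might fail to survive the Ces\`aro averaging if $V''$ grows with $\|\mathbf{u}\|$. The customary fix is either to impose a mild polynomial growth condition on $V''$ compatible with \eqref{eqn:Bellman-bound}, or to exploit the fact that the stationary distribution induced by $\omega^{\dagger}$ has finite first moment (which follows from $d^{\dagger}<\infty$ plus the unichain hypothesis), so that the remainder is $o(\Delta)$ on average. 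A secondary subtlety is that the comparison between $\omega^\dagger$ and $\omega^*$ closes only because the arrival term $\Delta\sum_{n,c} V'_{n,(c)}(\mathbf{u})\lambda_n^{(c)}$ in the expansion is genuinely independent of $(I,\boldsymbol\nu)$; this in turn relies on the mutual independence of $\{A_n^{(c)}(t)\}$ from the control, which is built into the model.
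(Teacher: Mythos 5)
Your proposal takes essentially the same route as the paper's Appendix~B proof: both begin with the Taylor expansion of $V$ to conclude that $\omega^{\dagger}$ is an $o(\Delta)$-approximate minimizer of the Bellman operator pointwise in $(s,\mathbf u)$, and then convert this into a long-run-average comparison. The paper compresses the second step into an appeal to Propositions 4.1.6 and 4.6.1 of Bertsekas, together with a footnoted two-function minimizer-comparison lemma, whereas you spell out the telescoping/Ces\`aro argument directly and observe that the reverse inequality $d(\Delta)\le d^{\dagger}(\Delta)$ is automatic; this is the same content, just made explicit. You also flag a genuine technical subtlety that the paper glosses over: the $o(\Delta)$ remainder in the multivariate Taylor expansion is state-dependent unless the second derivatives of $V$ are uniformly controlled over the countably infinite $\mathcal U$, so the Ces\`aro average of the remainder does not obviously stay $o(\Delta)$. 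The paper handles this only implicitly, by assuming twice-differentiability of $V$ and citing \cite{CuitwohopIT2015}; your suggested fixes (a growth condition on $V''$ compatible with \eqref{eqn:Bellman-bound}, or finiteness of the first moment under the unichain hypothesis) are reasonable ways to close that gap and are a useful refinement of the argument as written.
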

\begin{proof} Please refer to Appendix B for the proof.
\end{proof}

Note that by Lemma~\ref{Lem:asym-opt-policy}, we have  $d^{\dagger}(\Delta)- d(\Delta)\to 0$ as  $\Delta\to 0$. This shows that policy $\omega^{\dagger}$ is asymptotically delay optimal when the scheduling slot duration is small. It can be easily shown that policy $\omega
 ^{\dagger}$  chooses resource allocation and routing action according to the following corollary.

\begin{Cor} [Asymptotically Delay Optimal Policy]  Given the observed   $(s,\mathbf u)\in \mathcal S \times  \mathcal U$, let $I^{\dagger}$ and $ \boldsymbol \nu^{\dagger}$  be defined as in \eqref{eqn:res_allo_opt-asym} and \eqref{eqn:rout_allo_opt-asym}.  If $\boldsymbol \nu^{\dagger}$ satisfies \eqref{eqn:bound-bellman}, then $\omega
^{\dagger}$ chooses the resource allocation and routing action $\omega
 ^{\dagger}(s,\mathbf u)=(I^{\dagger}, \boldsymbol \nu^{\dagger})$.

\textbf{Resource Allocation}: 
For each link $(a,b)\in \mathcal
L$ and commodity $c\in \mathcal C$, let
\begin{align}
&\delta V_{ab}^{(c)}(\mathbf u)\triangleq V_{a,(c)}'(\mathbf u)- V_{b,(c)}'(\mathbf u)
\label{eqn:weight-asym}
\end{align}
denote the asymptotically delay optimal backpressure of link $(a,b)$ w.r.t. commodity $c$.  Let 
$c^{\dagger}_{ab}(\mathbf u)\triangleq \arg
\max_{c\in \{c: (a,b)\in \mathcal L^{(c)}\}}\delta V_{ab}^{(c)}(\mathbf u)$ and $\delta V_{ab}^{\dagger}(\mathbf u)
\triangleq \left(\delta V_{ab}^{(c^{\dagger}_{ab}(\mathbf u))}(\mathbf u)\right)^+$, where $(x)^+\triangleq\max\{x, 0\}$. 
Choose the resource allocation
action  as: 
\begin{align}
I^{\dagger}=\arg\max_{I \in \mathcal I}\sum_{(a,b)\in \mathcal L}\delta V_{ab}^{\dagger}(\mathbf u)
R_{ab}\big(S, I\big). \label{eqn:res_allo_opt-asym}
\end{align}

\textbf{Routing}: For each link $(a,b)\in \mathcal
L$ and commodity $c\in \mathcal C$,  choose the routing action according to: 
\begin{align}
\nu^{(c){\dagger}}_{ab}
=&\begin{cases} R_{ab}\big(S, I^{\dagger}\big),
&  \delta V_{ab}^{\dagger}(\mathbf u)>0\  \text{and}\ c=c^{\dagger}_{ab}(\mathbf u) \\
0, &\text{otherwise.}
\end{cases}\label{eqn:rout_allo_opt-asym}
\end{align}
\label{Cor:asym-opt-policy}
\end{Cor}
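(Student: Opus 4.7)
The plan is to show that the variational characterization \eqref{eqn:prob-asymp} of $\omega^{\dagger}$ separates into an inner per-link routing problem and an outer resource allocation problem, and that the closed-form solutions of these two nested problems are precisely \eqref{eqn:rout_allo_opt-asym} and \eqref{eqn:res_allo_opt-asym}. Throughout, the state $(s,\mathbf u)$ is fixed, and we work with the relaxed feasible set in which the only constraints on $(I,\boldsymbol\nu)$ are $I\in\mathcal I$, \eqref{eqn:rout_cost_sum-bellman} and \eqref{eqn:rout_cost_non_neg-bellman}; the hypothesis that $\boldsymbol\nu^{\dagger}$ satisfies \eqref{eqn:bound-bellman} is what lets us identify the maximizer of the relaxed problem with the maximizer of \eqref{eqn:prob-asymp}.

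First, using the definition \eqref{eqn:weight-asym}, rewrite the objective in \eqref{eqn:prob-asymp} as
\begin{equation*}
\sum_{(a,b)\in\mathcal L}\sum_{c\in\mathcal C}\nu_{ab}^{(c)}\,\delta V_{ab}^{(c)}(\mathbf u),
\end{equation*}
and observe that \eqref{eqn:rout_cost_non_neg-bellman} forces $\nu_{ab}^{(c)}=0$ whenever $(a,b)\notin\mathcal L^{(c)}$, while \eqref{eqn:rout_cost_sum-bellman} couples the components of $\boldsymbol\nu$ across commodities only through the per-link budget $R_{ab}(s,I)$. Hence, for each fixed $I\in\mathcal I$, the maximization over $\boldsymbol\nu$ decouples across links.

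Second, solve the inner per-link problem. For a fixed link $(a,b)$ and fixed $I$, we maximize the linear function $\sum_{c:(a,b)\in\mathcal L^{(c)}}\nu_{ab}^{(c)}\delta V_{ab}^{(c)}(\mathbf u)$ over the simplex $\{\nu_{ab}^{(c)}\ge 0,\ \sum_c\nu_{ab}^{(c)}\le R_{ab}(s,I)\}$. Because the objective is linear, an optimum lies at an extreme point: if $\max_c \delta V_{ab}^{(c)}(\mathbf u)>0$, the whole budget $R_{ab}(s,I)$ is assigned to any maximizing commodity $c^{\dagger}_{ab}(\mathbf u)$, yielding inner value $\delta V_{ab}^{(c^{\dagger}_{ab}(\mathbf u))}(\mathbf u)\,R_{ab}(s,I)$; otherwise the budget is unused, giving inner value $0$. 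Combining both cases with the notation $\delta V_{ab}^{\dagger}(\mathbf u)=(\delta V_{ab}^{(c^{\dagger}_{ab}(\mathbf u))}(\mathbf u))^{+}$ gives exactly the routing prescription \eqref{eqn:rout_allo_opt-asym}, and the optimized inner value on link $(a,b)$ is $\delta V_{ab}^{\dagger}(\mathbf u)\,R_{ab}(s,I)$.

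Third, substitute the optimized inner value back into \eqref{eqn:prob-asymp} to obtain the outer problem $\max_{I\in\mathcal I}\sum_{(a,b)\in\mathcal L}\delta V_{ab}^{\dagger}(\mathbf u)R_{ab}(s,I)$, which is precisely \eqref{eqn:res_allo_opt-asym}. Thus the pair $(I^{\dagger},\boldsymbol\nu^{\dagger})$ achieves the maximum in \eqref{eqn:prob-asymp} over the relaxed feasible set. By the hypothesis that $\boldsymbol\nu^{\dagger}$ also satisfies the buffer constraint \eqref{eqn:bound-bellman}, $(I^{\dagger},\boldsymbol\nu^{\dagger})$ is feasible for the original problem and therefore attains $\omega^{\dagger}(s,\mathbf u)=(I^{\dagger},\boldsymbol\nu^{\dagger})$.

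The main subtlety I expect is the handling of the buffer constraint \eqref{eqn:bound-bellman}: the decoupling-and-extreme-point argument inherently ignores this constraint, since it mixes commodities from a single node across all outgoing links. The corollary circumvents this by imposing feasibility of $\boldsymbol\nu^{\dagger}$ as an hypothesis, so the logic is that the relaxed optimum happens to lie in the smaller feasible set; verifying this carefully, and noting that tie-breaking in $c^{\dagger}_{ab}(\mathbf u)$ or in $\arg\max_{I\in\mathcal I}$ does not affect the argument because any maximizer yields the same value of the objective, are the only points requiring care.
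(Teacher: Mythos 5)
Your proof is correct, and it supplies the standard relax-then-decouple argument that the paper invokes implicitly: the paper states only that ``it can be easily shown that policy $\omega^{\dagger}$ chooses resource allocation and routing action according to the following corollary'' and gives no explicit derivation. Your three steps -- dropping the node-level buffer constraint \eqref{eqn:bound-bellman} so the maximization over $\boldsymbol\nu$ decouples across links, solving each per-link linear program at an extreme point of the simplex to obtain \eqref{eqn:rout_allo_opt-asym}, and substituting the optimized per-link value $\delta V_{ab}^{\dagger}(\mathbf u)\,R_{ab}(s,I)$ back in to obtain the outer problem \eqref{eqn:res_allo_opt-asym} -- are precisely the intended argument, and your closing observation that a relaxed maximizer which happens to satisfy \eqref{eqn:bound-bellman} must also maximize the constrained problem correctly identifies why the corollary needs that hypothesis.
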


\subsection{Connection to the BP Algorithm}\label{subset:connection}

We now discuss the connection between the asymptotically delay optimal policy and  the throughput optimal 
BP algorithm. By Corollary~\ref{Cor:asym-opt-policy}, we can see that  the asymptotically delay optimal control, obtained using dynamic programming, shares striking similarities with the 
BP algorithm\cite{Tassiulas-Ephremides:1992-2, Georgiadis-Neely-Tassiulas:2006}, obtained using Lyapunov drift techniques. Specifically, the two algorithms both base resource allocation and routing on a backpressure calculation. On the other hand, the two algorithms differ in the form of the backpressure calculation employed.  In the BP algorithm, the backpressure of  link $(a,b)$ w.r.t. commodity $c$ is derived from the difference
between  the queue lengths of commodity $c$ at the two end nodes $a$ and $b$ of the link, i.e., $u_a^{(c)}-u_b^{(c)}$.  In the asymptotically delay optimal control algorithm, the backpressure of link $(a,b)$ w.r.t.  commodity $c$ is derived from the differences of the derivatives of the value function at the two end nodes $a$ and $b$ of the link, i.e., $V_{a,(c)}'(\mathbf u)- V_{b,(c)}'(\mathbf u)$. 

 The following lemma summarizes the property of the value function. First, we introduce the BP control $\omega^{\ddagger}$ as follows:
 \begin{align}
\omega^{\ddagger}(s,\mathbf u)=&\arg\max_{I, \boldsymbol \nu
}\sum_{(a,b)\in \mathcal L}\sum_{c\in \mathcal C}\nu_{ab}^{(c)}\left(u^{(c)}_a- u^{(c)}_b\right)\label{eqn:prob-DBP}
\end{align}
for all $(s,\mathbf u)\in \mathcal S \times  \mathcal U$, where $I$ and $\boldsymbol \nu$ in \eqref{eqn:prob-DBP} satisfy $I \in \mathcal I$, \eqref{eqn:rout_cost_sum-bellman}, \eqref{eqn:rout_cost_non_neg-bellman} and \eqref{eqn:bound-bellman}. 
We know that the traditional BP algorithm\cite{Tassiulas-Ephremides:1992-2, Georgiadis-Neely-Tassiulas:2006} is closely related to $\omega^{\ddagger}$.\footnote{The BP algorithm does not consider the constraint in  \eqref{eqn:bound-bellman}, as \eqref{eqn:bound-bellman} is automatically satisfied when queue lengths are large and does not matter when dealing with throughput optimality.} 

\begin{Lem} [Comparison between $\omega^{\dagger}$ and $\omega^{\ddagger}$] For some $n\in \mathcal N$ and $c\in \mathcal C$, there does not exist a function $g_n^{(c)}(u_n^{(c)})$, such that $V_{n,(c)}'(\mathbf u)=g_n^{(c)}(u_n^{(c)})$. Furthermore, there exists $(s,\mathbf u)\in \mathcal S \times  \mathcal U$ such that $\omega^{\dagger}(s,\mathbf u)\neq \omega^{\ddagger}(s,\mathbf u)$. \label{Lem:asym-opt-policy-GQSI}
\end{Lem}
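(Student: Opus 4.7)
The plan is to prove the two claims in order, using the first as leverage for the second. For the first claim, I would proceed by contradiction: suppose that for every $n \in \mathcal{N}$ and $c \in \mathcal{C}$ there is a scalar function $g_n^{(c)}$ with $V_{n,(c)}'(\mathbf{u}) = g_n^{(c)}(u_n^{(c)})$. Since $V(\cdot)$ is assumed twice differentiable, this forces every mixed partial $\partial^2 V / \partial u_n^{(c)} \partial u_m^{(c')}$ to vanish for $(n,c)\neq(m,c')$, so $V$ decomposes as $V(\mathbf{u}) = \sum_{n,c} G_n^{(c)}(u_n^{(c)})$ for some one-dimensional antiderivatives $G_n^{(c)}$.

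Next, I would show that such a separable value function is incompatible with the Bellman equation~\eqref{eqn:Bellman} on a carefully chosen instance of the network model. The natural choice is a small topology in which queues are genuinely coupled through the action space, e.g., a diamond network in which two internal nodes each forward a single commodity over a shared bottleneck link, or a two-source single-relay setup in which one transmission resource must be shared between distinct commodities. Plugging the separable ansatz into the RHS of \eqref{eqn:Bellman}, the routing/scheduling minimization still couples the successor queues via the conservation dynamics~\eqref{eqn:queue_dyn-mdp} and the resource constraint~\eqref{eqn:rout_cost_sum-bellman}, so that $\mathbb{E}[V(\mathbf{U}(t+1))\mid \mathbf{u}]$ depends on the joint state $\mathbf{u}$ in a way that cannot be re-expressed as a sum of single-coordinate functions. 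Writing down the Bellman equation at two states that agree in some coordinates but differ in others, and differentiating in $u_n^{(c)}$, produces a relation that couples $g_m^{(c')}$ for $m \neq n$ to $u_n^{(c)}$, contradicting the separability hypothesis.

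Given the first claim, the second claim follows by a direct construction. I pick a pair $(n,c)$ for which $V_{n,(c)}'(\mathbf{u})$ genuinely depends on some $u_m^{(c')}$ with $(m,c') \neq (n,c)$, and fix a link $(a,b)$ incident to $n$ and a topology state $s$ such that we can choose $\mathbf{u}$ so that the sign or relative ordering of the backpressures $V_{a,(c)}'(\mathbf{u}) - V_{b,(c)}'(\mathbf{u})$ and $u_a^{(c)} - u_b^{(c)}$ disagree across commodities or links. Since both $\omega^{\dagger}$ and $\omega^{\ddagger}$ select their resource allocation and routing actions as argmaxes of weighted sums of their respective backpressures (cf.\ Corollary~\ref{Cor:asym-opt-policy} and \eqref{eqn:prob-DBP}), a disagreement in the backpressure ordering at a single link forces the two argmaxes to choose different commodities or different links, yielding $\omega^{\dagger}(s,\mathbf{u}) \neq \omega^{\ddagger}(s,\mathbf{u})$.

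The main obstacle is the first claim: ruling out the separable ansatz cleanly for a reasonably general model. The risk is that a handwavy appeal to ``queues are coupled'' is not enough, since one must show that the coupling actually manifests in $V(\cdot)$ rather than being absorbed into constants. I expect the cleanest route is to exhibit an explicit small instance where the Bellman fixed-point equation, evaluated under the separable ansatz, leads to a contradiction after one round of the min operator — essentially picking a network where the optimal routing decision at some state depends on a non-local queue and translating this fact into a non-zero mixed partial derivative of $V$.
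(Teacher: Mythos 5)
Your overall logical architecture matches the paper: prove the first claim by contradiction against the Bellman equation, then use the non-locality of $V_{n,(c)}'$ to exhibit a state at which $\omega^{\dagger}$ and $\omega^{\ddagger}$ disagree. The route you take through the contradiction, however, differs in a way that matters. You propose to first promote separable derivatives to a fully separable value function $V(\mathbf u)=\sum_{n,c}G_n^{(c)}(u_n^{(c)})$ (via vanishing mixed partials), and then refute that ansatz on a \emph{hand-built instance} (diamond network, shared relay, etc.). The paper instead stays generic: it keeps the separability hypothesis at the level of the derivatives, substitutes directly into the Taylor-expanded Bellman identity already established in Appendix B (namely $d=\sum_{n,c}u_n^{(c)}+\Delta\sum_{n,c}V_{n,(c)}'(\mathbf u)F_{n,(c)}(\mathbf u)+o(\Delta)$, where $F_{n,(c)}$ is the expected net drift at $(n,c)$ under $\omega^{\dagger}$), solves for a single $g_{k,(c')}(u_k^{(c')})$, and observes that the resulting right-hand side depends on the global $\mathbf u$ through $F_{k,(c')}(\mathbf u)$ while the left-hand side depends only on $u_k^{(c')}$ --- a generic functional contradiction with no instance chosen. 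Your route buys concreteness and would make the coupling mechanism very explicit; the paper's buys generality, which is what the lemma actually asserts.

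That scope difference is the one genuine gap you should be careful about: the lemma is stated as a property of the network in the section (under the regularity assumptions that $V$ exists and is twice differentiable), not as an existential statement over topologies. A fully worked-out instance establishes the lemma only for that instance, leaving open whether the conclusion holds for the general model. You would then need an additional argument that your instance is ``embedded'' in any sufficiently rich model, or abandon the instance and argue generically as the paper does. You already flag this as ``the main obstacle,'' and you are right to. The fix is essentially to skip the separable-$V$ detour, work with the asymptotic Bellman identity from Appendix B, and argue that the drift coefficients $F_{n,(c)}(\mathbf u)$ inherit global $\mathbf u$-dependence from the routing action even under the separable ansatz --- that is the paper's mechanism, and it avoids committing to a specific topology. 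Your treatment of the second claim (find $\mathbf u$ where the two backpressure orderings disagree, hence the argmaxes differ) is essentially identical to the paper's, and carries the same level of informality, so no additional gap there.
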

\begin{proof} Please refer to Appendix C for the proof.
\end{proof}

Lemma~\ref{Lem:asym-opt-policy-GQSI} shows that   the backpressure term for the asymptotically delay optimal control is  in general a function of the {\em global} QSI.  On the other hand, the BP backpressure term reflects {\em local} QSI.  Therefore, the asymptotically delay optimal control and  the traditional BP algorithm are significantly different.

Suppose that policy $\omega^{\ddagger}$ is a unichain policy. Denote the average queue length under  $\omega^{\ddagger}$ by $d^{\ddagger}=\limsup_{t\to\infty}\frac{1}{t}\sum_{\tau=0}^{t-1}\sum_{n\in \mathcal N,c\in \mathcal C} \mathbb
E[U^{(c)}_n(\tau)]$, where the expectation is taken w.r.t. the probability measure induced by $\omega^{\ddagger}$. We also write $d^{\ddagger}$ as $d^{\ddagger}(\Delta)$.

\begin{Lem} [Comparison between $d^{\ddagger}$ and $d^{\dagger}$] For all slot durations $\Delta>0$, suppose  policy $\omega^{\dagger}$ and policy $\omega^{\ddagger}$  are unichain policies.   Then,    there exists $\epsilon>0$ such that   $d^{\ddagger}(\Delta)-d^{\dagger}(\Delta)\geq \epsilon+o(\Delta)$ as $\Delta \to 0$. \label{Lem:asym-DBP}
\end{Lem}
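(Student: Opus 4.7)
The plan is to combine the asymptotic optimality of $\omega^{\dagger}$ from Lemma~\ref{Lem:asym-opt-policy} with a Bellman-equation averaging argument applied to $\omega^{\ddagger}$, drawing the essential strict suboptimality from Lemma~\ref{Lem:asym-opt-policy-GQSI}.

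First, I would invoke Lemma~\ref{Lem:asym-opt-policy} to write $d^{\dagger}(\Delta)=d(\Delta)+o(\Delta)$, reducing the claim to showing $d^{\ddagger}(\Delta)-d(\Delta)\geq\epsilon+o(\Delta)$ for some absolute constant $\epsilon>0$. Writing Lemma~\ref{Lem:Bellman} as the pointwise Bellman inequality for the suboptimal unichain policy $\omega^{\ddagger}$ and averaging against its stationary distribution $\pi^{\ddagger}$, then using the invariance $\mathbb E_{\pi^{\ddagger}}[V(\mathbf U(t+1))]=\mathbb E_{\pi^{\ddagger}}[V(\mathbf U(t))]$ to cancel the value-function terms on both sides, yields the standard suboptimality identity
\begin{align*}
d^{\ddagger}(\Delta)-d(\Delta)=\mathbb E_{\pi^{\ddagger}}\!\left[G(S,\mathbf U;\Delta)\right]\geq 0,
\end{align*}
where $G(s,\mathbf u;\Delta)=\sum_{\mathbf u'}P_{(s,\mathbf u),\mathbf u'}(\omega^{\ddagger}(s,\mathbf u))V(\mathbf u')-\min_{(I,\boldsymbol\nu)}\sum_{\mathbf u'}P_{(s,\mathbf u),\mathbf u'}(I,\boldsymbol\nu)V(\mathbf u')$ is the one-step Bellman regret of $\omega^{\ddagger}$, pointwise nonnegative by Lemma~\ref{Lem:Bellman}.

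Second, I would evaluate $G$ at the state $(\tilde s,\tilde{\mathbf u})$ isolated in Lemma~\ref{Lem:asym-opt-policy-GQSI} at which $\omega^{\ddagger}$ and $\omega^{\dagger}$ disagree. Reusing the Taylor expansion of $L^{a}[V]$ in the slot duration $\Delta$ that underlies Lemma~\ref{Lem:asym-opt-policy} and Corollary~\ref{Cor:asym-opt-policy}, together with the fact that $\omega^{\dagger}$ nearly attains the minimum in the definition of $G$, the regret $G(\tilde s,\tilde{\mathbf u};\Delta)$ reduces, up to an $o(\Delta)$ remainder, to the strictly positive backpressure-gap quantity
\begin{align*}
\sum_{(a,b)\in\mathcal L}\sum_{c\in\mathcal C}\bigl(\nu_{ab}^{(c)\dagger}-\nu_{ab}^{(c)\ddagger}\bigr)\bigl(V_{a,(c)}'(\tilde{\mathbf u})-V_{b,(c)}'(\tilde{\mathbf u})\bigr),
\end{align*}
whose positivity follows from the defining maximality of $\omega^{\dagger}$ in~\eqref{eqn:prob-asymp} together with $\omega^{\ddagger}\neq\omega^{\dagger}$ at $(\tilde s,\tilde{\mathbf u})$.

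Finally, since $G\geq 0$ everywhere and the unichain hypothesis on $\omega^{\ddagger}$ guarantees $\pi^{\ddagger}(\tilde s,\tilde{\mathbf u})>0$, the averaging identity yields $d^{\ddagger}(\Delta)-d(\Delta)\geq \pi^{\ddagger}(\tilde s,\tilde{\mathbf u})\,G(\tilde s,\tilde{\mathbf u};\Delta)$, which combined with the first step gives the claim. The main obstacle I anticipate is establishing the uniformity in $\Delta$ of the resulting $\epsilon$: both $\pi^{\ddagger}$ and the derivatives $V_{n,(c)}'$ entering the backpressure gap depend non-trivially on $\Delta$, and since $\mathcal U$ is countably infinite, keeping $\epsilon$ bounded away from zero as $\Delta\to 0$ requires robust lower bounds on both the stationary mass at $(\tilde s,\tilde{\mathbf u})$ and the value-function gradient gap there, which I expect to be the most delicate part of the argument.
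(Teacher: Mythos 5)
Your proposal follows essentially the same route as the paper's Appendix~D: reduce via Lemma~\ref{Lem:asym-opt-policy} to comparing $d^{\ddagger}(\Delta)$ with $d(\Delta)$, observe that $\omega^{\ddagger}$ incurs a nonnegative one-step Bellman regret $G$ pointwise, use Lemma~\ref{Lem:asym-opt-policy-GQSI} together with the Taylor expansion to extract a strictly positive regret at some state where the two policies disagree, and then push that pointwise gap through to the average cost. The difference is one of articulation: you make the propagation step explicit via the stationary-distribution averaging identity $d^{\ddagger}(\Delta)-d(\Delta)=\mathbb E_{\pi^{\ddagger}}[G]$ (a Poisson-equation argument), whereas the paper simply cites Propositions~4.1.6 and~4.6.1 of Bertsekas to pass from the operator inequality $T^{\ddagger}_{(s,\mathbf u)}\mathbf h\geq d(\Delta)+h(s,\mathbf u)+o(\Delta)+\epsilon$ (at some state) to the conclusion $d^{\ddagger}(\Delta)\geq d(\Delta)+o(\Delta)+\epsilon$. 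Your formulation is the more transparent of the two, and it exposes exactly what those Bertsekas propositions by themselves do not give: they yield only the weak bound $d^{\ddagger}\geq d$, so the strict $\epsilon$ gap in the average cost requires that $\pi^{\ddagger}$ assigns mass to the strict-regret state, and that this mass and the regret itself stay bounded away from zero as $\Delta\to 0$. You are right that this $\Delta$-uniformity is the delicate point: $\pi^{\ddagger}$, the value-function gradient gap, and even the disagreement state from Lemma~\ref{Lem:asym-opt-policy-GQSI} all depend on $\Delta$, and the paper's proof does not supply the lower bounds needed to rule out $\pi^{\ddagger}(\tilde s,\tilde{\mathbf u})\cdot G(\tilde s,\tilde{\mathbf u};\Delta)\to 0$. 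So your blind reconstruction is not only faithful to the paper's argument but flags an implicit assumption the paper glosses over; to fully close the lemma one would need, e.g., a uniform-ergodicity (uniform mixing) hypothesis on $\omega^{\ddagger}$ over small $\Delta$ plus a $\Delta$-uniform lower bound on the backpressure-gradient gap at a fixed disagreement state.
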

\begin{proof} Please refer to Appendix D for the proof.
\end{proof}

Lemma~\ref{Lem:asym-DBP} shows that the BP algorithm has worse delay performance than the asymptotically delay optimal control $\omega^{\dagger}$. By Lemma~\ref{Lem:asym-opt-policy} and Lemma~\ref{Lem:asym-DBP}, we have   $d^{\ddagger}(\Delta)-d(\Delta)\geq\epsilon+ o(\Delta)$. Thus, $d^{\ddagger}(\Delta)-d(\Delta)\not \to 0$ as $\Delta\to 0$. Therefore,    the BP algorithm is not  asymptotically delay optimal. 

Lemma~\ref{Lem:asym-opt-policy-GQSI} and Lemma~\ref{Lem:asym-DBP}  suggest a  possible reason for the generally unsatisfactory delay performance of the BP algorithm.  Namely, the BP delay performance suffers from the {\em myopic} nature of the control, which relies only on one-hop queue size differences.  To the best of our knowledge, this is the first work which provides an analytical connection between delay optimal network control and the BP algorithm (throughput optimal network control).  The connection provides a theoretical basis for designing enhanced BP algorithms with improved delay performance via the use of QSI beyond one hop.  Motivated by this connection, we shall present a new class of enhanced BP-based algorithms, for the cases where  $\boldsymbol \lambda  \in  \text{int}(\Lambda)$ and $\boldsymbol \lambda \notin \Lambda$ in Sections~\ref{Sec:in} and \ref{Sec:out}, respectively.

\section{Enhanced BP Algorithms with \\Stabilizable Arrival Rates}\label{Sec:in}

In this section,  we consider the case where $\boldsymbol \lambda  \in \text{int}(\Lambda)$,  and  develop a new class of {\em enhanced BP algorithms}.

\subsection{Network Layer Queue Dynamics}

When $\boldsymbol \lambda  \in \text{int}(\Lambda)$, the arrival data can be directly admitted
to the network. Let $\mu^{(c)}_{ab}(t)\Delta\geq 0$ 
represent the amount of commodity $c$ data (bits) which can be transmitted over link $(a,b)$ during slot $t$. Let $\boldsymbol \mu(t)\triangleq(\mu^{(c)}_{ab}(t))$. Similar to $\boldsymbol \nu(t)$, $\boldsymbol \mu(t)$ also satisfies \eqref{eqn:rout_cost_sum}  and \eqref{eqn:rout_cost_non_neg} (in terms of $\mu^{(c)}_{ab}(t)$ instead of $\nu^{(c)}_{ab}(t)$). Unlike $\boldsymbol \nu(t)$, $\boldsymbol \mu(t)$ does not have to satisfy \eqref{eqn:bound} (in terms of $\mu^{(c)}_{ab}(t)$ instead of $\nu^{(c)}_{ab}(t)$). In other words,  for each node $n\in \mathcal N$ and commodity $c\in \mathcal C$, $\mu^{(c)}_{nb}(t)=\nu^{(c)}_{nb}(t)$ for all $b\in \mathcal N$ when there are enough bits to be removed from  the buffer at node $n$ for commodity $c$, i.e., \eqref{eqn:bound} is satisfied (in terms of $\mu^{(c)}_{ab}(t)$ instead of $\nu^{(c)}_{ab}(t)$). 
Thus,  for each commodity $c\in \mathcal C$ and node $n\in \mathcal N$, $n\neq dest(c)$,  the network queue dynamics satisfies: 
\begin{align}
&U^{(c)}_n(t+1)\label{eqn:queue_dyn}\\
\leq &
\left(U^{(c)}_n(t)-\sum_{b\in \mathcal N}\mu^{(c)}_{nb}(t)\Delta\right)^+ +A^{(c)}_n(t)\Delta
+\sum_{a\in \mathcal N}\mu^{(c)}_{an}(t)\Delta.\nonumber
\end{align}
Inequality holds in \eqref{eqn:queue_dyn} because the actual amount of commodity $c$ data arriving to node $n$ during slot $t$ may be less than $\sum_{a\in \mathcal N}\mu^{(c)}_{an}(t)\Delta$ if the neighboring nodes have little or no commodity $c$ data to transmit. To facilitate the design of throughput optimal control, we consider routing control in terms of $\boldsymbol \mu(t)$ instead of $\boldsymbol \nu(t)$, as in \cite{Tassiulas-Ephremides:1992-2,Georgiadis-Neely-Tassiulas:2006}.

\subsection{Bias Function}

Motivated by  the  connection between the asymptotically delay optimal control and the throughput optimal BP algorithm in Section~\ref{subset:connection},  we now propose a general QSI-dependent bias function to incorporate QSI beyond one hop in order to mitigate the myopic nature of the BP algorithm.

We first present a general nonnegative QSI-dependent bias function for each node $n\in \mathcal N$ and commodity $c\in \mathcal C$:
\begin{align}
f_n^{(c)}(\mathbf
u)=\sum_{k\in \mathcal N}\eta^{(c)}_{nk}(\mathbf
u) \frac{u^{(c)}_{k}}{z_{k}^{(c)}}. \label{eqn:bias-func}
\end{align}
Here, $\eta^{(c)}_{nk}(\mathbf
u)\in [0,1]$ is the weight associated with QSI $u^{(c)}_{k}$ at node $n$, representing the relative importance of $u^{(c)}_{k}$ in the bias at node $n$.   Note that in general, $\eta^{(c)}_{nk}(\mathbf
u)$ is allowed to depend on the global QSI ${\mathbf u}$.  The parameter $z_{k}^{(c)}>0$ is designed to
guarantee network stability.  The proper choice of $z_{k}^{(c)}>0$ will be discussed below in Theorem~\ref{Thm:thpt-opt}.
We can treat $\frac{u^{(c)}_{k}}{z_{k}^{(c)} }$  as a normalized version of $u^{(c)}_{k}$. 
Later, we shall see that  the quantity $u^{(c)}_n+f_n^{(c)}(\mathbf u)$ can be regarded as  a tractable approximation  of $V_{n,(c)}'(\mathbf u)$  (cf. \eqref{eqn:weight} and \eqref{eqn:weight-asym}) in the asymptotically delay optimal policy in \eqref{eqn:prob-asymp}.

While the bias function $f_n^{(c)}(\mathbf
u)$ in \eqref{eqn:bias-func}  is generally written as a function of the global  QSI,
one can choose the bias function to depend only on the local QSI within one hop as follows: 
\begin{align}
f_n^{(c)}\left(\mathbf
u_n^{(c)}\right)=\sum_{k\in \{k:(n,k)\in \mathcal L^{(c)}\}} \eta^{(c)}_{nk}\left(\mathbf
u_n^{(c)}\right) \frac{u^{(c)}_{k}}{z_{k}^{(c)}}\label{eqn:spe-f}
\end{align}
where $\mathbf u_n^{(c)}\triangleq (u_{k}^{(c)})_{(n,k)\in \mathcal L^{(c)}}$ is the local QSI within one hop.

Each specific choice of a bias function $\mathbf f\triangleq (f_n^{(c)})$ corresponds to one enhanced BP algorithm (described in the next subsection), and the amount of QSI contributing to the bias function determines the implementation complexity of the corresponding enhanced BP algorithm.  The form of the bias function  (cf.   \eqref{eqn:bias-func} and \eqref{eqn:spe-f}) is carefully chosen to enable (generalized) throughput optimality (Theorem~\ref{Thm:thpt-opt}) of the enhanced BP algorithms, while at the same time offering a high degree of flexibility in choosing specific enhanced BP algorithms with manageable complexity, distributed implementation, and significantly better delay performance. 
In Section~\ref{sec:ex}, we shall describe two enhanced BP algorithms resulting from two specific choices for  $\mathbf f$   in \eqref{eqn:spe-f} and   \eqref{eqn:bias-func}, respectively, which embody the desirable properties described above.

\subsection{Enhanced BP Algorithm}\label{Subsec:DBP}

We now present a new class of enhanced BP algorithms by incorporating 
the general QSI-dependent  bias functions $\mathbf f$ defined in \eqref{eqn:bias-func} into the BP backpressure calculation.

\begin{Alg}[Enhanced BP] 

Let a set of bias functions $\mathbf f\succeq 0$ be given.\footnote{The notation $\succeq$, $\preceq$, $\succ$, $\prec$ indicate component-wise $\geq$, $\leq$, $>$, $<$.}  At each slot $t$, the network 
controller observes the network layer QSI $\mathbf U(t)$ and the
topology state $S(t)$, and performs the following resource
allocation and routing actions.

\textbf{Resource Allocation}: 
For each link $(a,b)\in \mathcal
L$ and commodity $c\in \mathcal C$, let\footnote{A QSI-independent bias, such as the shortest-path bias used in \cite{Neely-Modiano-Rohrs:2005},  
can easily be incorporated into bias functions.  It can be verified that, with any extra QSI-independent  bias, Theorem~\ref{Thm:thpt-opt}  (Theorem~\ref{Thm:flow-control}) still holds  with a constant shift of $\bar B$ in \eqref{eqn:bound-B} ($\hat B$ in \eqref{eqn:bound-B-flow})\cite{Neely-Modiano-Rohrs:2005}.} 
\begin{align}
&W_{ab}^{(c)}(\mathbf U(t))\nonumber\\
\triangleq &\left(U^{(c)}_a(t)+f_a^{(c)}(\mathbf U(t))\right)-\left(U^{(c)}_b(t)+
f_b^{(c)}(\mathbf U(t))\right)\label{eqn:weight}
\end{align}
denote the enhanced BP backpressure of link $(a,b)$ w.r.t. commodity $c$. Let $c^*_{ab}(\mathbf U(t))\triangleq \arg
\max_{c\in\{c: (a,b)\in \mathcal L^{(c)}\}}W_{ab}^{(c)}(\mathbf U(t))$ and $
W^*_{ab}(\mathbf U(t))
\triangleq \left(W_{ab}^{(c^*_{ab}(\mathbf U(t)))}(\mathbf U(t))\right)^+$. 
Choose the resource allocation
action as follows:
\begin{align}
I(t)=\arg\max_{I \in \mathcal I}\sum_{(a,b)\in \mathcal L}W^*_{ab}(\mathbf U(t))
R_{ab}\big(S(t), I\big). \label{eqn:res_allo_opt}
\end{align}

\textbf{Routing}: For each link $(a,b)\in \mathcal
L$ and commodity $c\in \mathcal C$,  choose: 
\begin{align}
\mu^{(c)}_{ab}(t)
=&\begin{cases} R_{ab}\big(S(t), I(t)\big),
&  W^*_{ab}(t)>0\  \text{and}\ c=c^*_{ab}(t)\\
0, &\text{otherwise}.
\end{cases}\nonumber
\end{align}
\label{Alg:enhanced-DBP}
\end{Alg}

Note that based on $\boldsymbol \mu(t)$, we can choose routing actions $\boldsymbol \nu(t)$. For each node $a\in \mathcal N$ and commodity $c\in \mathcal C$, if $\sum_{b\in \mathcal N_{out,a}^{(c)}}\mu^{(c)}_{ab}(t)\Delta\leq U^{(c)}_a(t)$, we
choose routing action $\nu^{(c)}_{ab}(t)=\mu^{(c)}_{ab}(t)$ for all $b\in \mathcal N_{out,a}^{(c)}$, where $\mathcal N_{out,a}^{(c)}\triangleq \{b:(a,b)\in \mathcal L^{(c)}\}$. When $\sum_{b\in \mathcal N_{out,a}^{(c)}}\mu^{(c)}_{ab}(t)\Delta> U^{(c)}_a(t)$, there are multiple choices for  $\boldsymbol \nu(t)$, which can guarantee generalized throughput optimality shown in Theorem~\ref{Thm:thpt-opt}. 

\subsection{Performance Analysis} \label{Subsec:DBP-performance}

As mentioned above, the traditional BP algorithm can have poor delay performance in networks with light to moderate traffic  loads.  Note that in this situation, there exists a  significant margin between the arrival rate vector and the boundary of the network stability region.  
Algorithm~\ref{Alg:enhanced-DBP}, with    the bias functions chosen according to~\eqref{eqn:bias-func},  can exploit this margin  (or a lower bound on the margin) to incorporate QSI beyond one hop, thereby reducing the average delay while maintaining a generalized notion of throughput optimality.  This result is summarized as follows. For notational simplicity, we assume $\Delta=1$.

\begin{Thm} [Generalized Throughput Optimality of Alg. 1] Given
 $\boldsymbol \epsilon\triangleq(\epsilon_n^{(c)})\succeq  \mathbf 0$  such
that $\boldsymbol \lambda+\boldsymbol \epsilon \in \text{int}(\Lambda)$, 
there exist $\boldsymbol \delta\triangleq (\delta_n^{(c)})\succ  \mathbf 0$ such that
$\boldsymbol \lambda+\boldsymbol \epsilon + \boldsymbol \delta \in 
\Lambda$, and $\mathbf z\triangleq(z_n^{(c)})\succ \mathbf 0$  such that $\boldsymbol \epsilon_{\mathbf z}\triangleq\left(\frac{2R_{\max}L^{(c)}}{z_n^{(c)}}\right)\preceq \boldsymbol \epsilon$.  Then, the queueing network under
Algorithm~\ref{Alg:enhanced-DBP}, with the bias functions chosen according to~\eqref{eqn:bias-func}, satisfies: 
\begin{align}
\limsup_{t\to\infty}\frac{1}{t}\sum_{\tau=0}^{t-1}\sum_{n\in \mathcal N,c\in \mathcal C} \mathbb
E[U^{(c)}_n(\tau)]\leq \frac{N\bar B}{\bar
\beta_{\mathbf z}}\label{eqn:enhanced-DBP-UB}
\end{align}
where  
\begin{align}
\bar B\triangleq& \frac{1}{2N}\sum_{n\in \mathcal N}\left((\mu^{out}_{n,
\max})^2+(A_{n,\max}+\mu^{in}_{n,\max})^2\right)\label{eqn:bound-B}\\
\bar \beta_{\mathbf z}\triangleq &\sup_{\substack{\{(\boldsymbol \epsilon,\boldsymbol \delta):  \boldsymbol \epsilon\succeq \boldsymbol \epsilon_{\mathbf z}, \boldsymbol \delta \succ \mathbf 0,\\ \quad \quad \ \boldsymbol \lambda+\boldsymbol \epsilon+\boldsymbol \delta\in  \Lambda\}}}\min_{ n\in \mathcal N,c\in \mathcal C}\left\{\epsilon_n^{(c)} +\delta_n^{(c)}
-\frac{2R_{\max}L^{(c)}}{z_n^{(c)}}\right\} \label{eqn:bound-e}
\end{align}
with $\mu^{out}_{n, \max}\triangleq \sup_{s\in \mathcal S, I\in
\mathcal I}\sum_{b\in \mathcal N} R_{nb}(s,I)$, $\mu^{in}_{n, \max}\triangleq
\sup_{s\in \mathcal S, I\in \mathcal I}\sum_{a\in \mathcal N} R_{an}(s,I)$, and $
A_{n,\max} \triangleq \sum_{c\in \mathcal C} 
A^{(c)}_{n,\max}$.\label{Thm:thpt-opt}
\end{Thm}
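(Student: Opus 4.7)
The plan is a Lyapunov drift argument that accounts for the QSI-dependent bias. I would work with the quadratic function $L(\mathbf U)=\tfrac{1}{2}\sum_{n,c}(U_n^{(c)})^2$. Squaring the queue dynamics in \eqref{eqn:queue_dyn}, using $(\max(x-y,0))^2\leq x^2-2xy+y^2$, summing over commodities, and taking conditional expectation given $\mathbf U(t)$, one obtains (after collapsing the second-order terms via $\sum_c(x_n^{(c)})^2\leq(\sum_c x_n^{(c)})^2$ together with the rate caps $\mu_{n,\max}^{out}$, $\mu_{n,\max}^{in}$, $A_{n,\max}$ to produce the constant $N\bar B$)
\[
\Delta L(t)\;\leq\;N\bar B+\sum_{n,c}U_n^{(c)}(t)\bigl(\lambda_n^{(c)}-\mathbb{E}[\phi_n^{(c)}(t)\mid\mathbf U(t)]\bigr),
\]
where $\phi_n^{(c)}:=\mu_n^{out,(c)}-\mu_n^{in,(c)}$ is the net departure rate under Algorithm~\ref{Alg:enhanced-DBP}.

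Next I would exploit the fact that Algorithm~\ref{Alg:enhanced-DBP} maximizes, over feasible $(I,\boldsymbol\mu)$, the quantity $\sum_{a,b,c}\mu_{ab}^{(c)}W_{ab}^{(c)}$, which a node-by-node rearrangement shows to equal $\sum_{n,c}(U_n^{(c)}+f_n^{(c)}(\mathbf U))\phi_n^{(c)}$. Because $\boldsymbol\lambda+\boldsymbol\epsilon+\boldsymbol\delta\in\Lambda$, the standard characterization of the stability region supplies a stationary randomized policy $\tilde\omega$ acting only on $S(t)$ with $\mathbb{E}[\phi_n^{(c),\tilde\omega}]=\lambda_n^{(c)}+\epsilon_n^{(c)}+\delta_n^{(c)}$ for every $(n,c)$. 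Comparing Algorithm~\ref{Alg:enhanced-DBP} against $\tilde\omega$, taking conditional expectation, and rearranging yields
\[
\mathbb{E}\!\left[\sum_{n,c}U_n^{(c)}\phi_n^{(c),\text{Alg}}\;\Big|\;\mathbf U(t)\right]\;\geq\;\sum_{n,c}U_n^{(c)}(\lambda_n^{(c)}+\epsilon_n^{(c)}+\delta_n^{(c)})-\mathbb{E}\!\left[\sum_{a,b,c}\bigl(\mu_{ab}^{(c),\tilde\omega}-\mu_{ab}^{(c),\text{Alg}}\bigr)\bigl(f_a^{(c)}-f_b^{(c)}\bigr)\;\Big|\;\mathbf U(t)\right].
\]
Substituting into the drift bound cancels $\sum_{n,c}U_n^{(c)}\lambda_n^{(c)}$ and leaves $-\sum_{n,c}U_n^{(c)}(\epsilon_n^{(c)}+\delta_n^{(c)})$ plus a residual bias-perturbation term.

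The crux of the proof, and the step I expect to be most delicate, is controlling this residual by $\sum_{n,c}(2R_{\max}L^{(c)}/z_n^{(c)})U_n^{(c)}(t)$. Expanding $f_a^{(c)}-f_b^{(c)}=\sum_k(\eta_{ak}^{(c)}-\eta_{bk}^{(c)})u_k^{(c)}/z_k^{(c)}$ from the definition~\eqref{eqn:bias-func}, swapping the $(a,b)$- and $k$-summations, then applying $|\mu_{ab}^{(c),\tilde\omega}|+|\mu_{ab}^{(c),\text{Alg}}|\leq 2R_{\max}$ and $|\eta_{ak}^{(c)}-\eta_{bk}^{(c)}|\leq 1$ term-by-term, and counting the $L^{(c)}$ links carrying commodity $c$ gives exactly the claimed bound; the key is to keep the $(a,b)$-sum \emph{inside} the $k$-sum, because bounding $f_n^{(c)}$ itself via $\sum_k u_k^{(c)}/z_k^{(c)}$ would cost a spurious factor of $N$. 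Combining the pieces yields
\[
\Delta L(t)\;\leq\;N\bar B-\sum_{n,c}U_n^{(c)}(t)\Bigl(\epsilon_n^{(c)}+\delta_n^{(c)}-\tfrac{2R_{\max}L^{(c)}}{z_n^{(c)}}\Bigr)\;\leq\;N\bar B-\bar\beta_{\mathbf z}\sum_{n,c}U_n^{(c)}(t),
\]
where taking the supremum over admissible $(\boldsymbol\epsilon,\boldsymbol\delta)$ as in~\eqref{eqn:bound-e} converts the $(n,c)$-dependent coefficient into the single positive constant $\bar\beta_{\mathbf z}$. Telescoping this drift inequality over $\tau=0,\dots,t-1$, taking expectations, dividing by $t$, and letting $t\to\infty$ yields the claimed upper bound $N\bar B/\bar\beta_{\mathbf z}$ via the standard Lyapunov drift lemma.
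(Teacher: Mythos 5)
Your proposal is correct and follows essentially the same Lyapunov-drift argument as the paper's proof in Appendix~E: squaring the queue dynamics to obtain the $N\bar B$ constant, invoking Algorithm~\ref{Alg:enhanced-DBP}'s optimality to compare against a stationary randomized policy achieving net rates $\lambda_n^{(c)}+\epsilon_n^{(c)}+\delta_n^{(c)}$ (Corollary~3.9 of~\cite{Georgiadis-Neely-Tassiulas:2006}), and bounding the bias contribution $\sum_{(a,b),c}\mu_{ab}^{(c)}(f_a^{(c)}-f_b^{(c)})$ by $\sum_{n,c}\frac{R_{\max}L^{(c)}}{z_n^{(c)}}U_n^{(c)}$ exactly as in~\eqref{eqn:proof-bound-fab}, applied once to the algorithm's $\mu$ and once to the comparison policy's $\tilde\mu$ to yield the $2R_{\max}L^{(c)}/z_n^{(c)}$ coefficient. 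The only cosmetic differences are your $\tfrac12$ normalization of the Lyapunov function and your packaging of the two bias-bound applications as a single bound on the difference $(\mu^{\tilde\omega}-\mu^{\mathrm{Alg}})(f_a^{(c)}-f_b^{(c)})$, neither of which changes the substance.
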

\vspace{0.1in}
\begin{proof} Please refer to Appendix E. 
\end{proof}

\begin{Rem} Theorem~\ref{Thm:thpt-opt} should be interpreted as follows.  When it is given that the arrival rate
vector $\boldsymbol \lambda$ is bounded away from the boundary of the stability region $\Lambda$ by at least $\boldsymbol \epsilon \succ \mathbf 0$, i.e., $\boldsymbol \lambda+\boldsymbol \epsilon \in \text{int}(\Lambda)$, 
one can choose a {\em finite} $\mathbf z \succ \mathbf 0$ such that $\boldsymbol \epsilon_{\mathbf z}\prec \boldsymbol \epsilon$.  
In this case, the limiting
average total queue size under Algorithm~\ref{Alg:enhanced-DBP} is    upper bounded as in~\eqref{eqn:enhanced-DBP-UB}.  Thus, Algorithm~\ref{Alg:enhanced-DBP} stabilizes the network for any arrival rate which is bounded away from the boundary of the stability region by at least $\boldsymbol \epsilon$, for any given $\boldsymbol \epsilon \succ \mathbf 0$. 
When it is only known that  $\boldsymbol \lambda \in \text{int}(\Lambda)$ and no extra margin is given ($\boldsymbol \epsilon = \mathbf 0$), then by Theorem~\ref{Thm:thpt-opt}, $z_n^{(c)}$  must be  chosen to be infinity for all $n\in \mathcal N$ and $c\in \mathcal C$ (i.e., $\mathbf f = \mathbf 0$).  In this case, Algorithm~\ref{Alg:enhanced-DBP} reduces to the traditional BP algorithm, and Theorem~\ref{Thm:thpt-opt} reduces to the traditional throughput optimality result (Lemma 4.1 in~\cite{Georgiadis-Neely-Tassiulas:2006}).
\label{Rem:DBP}
\end{Rem}

The margin (or a lower bound on the margin)  $\boldsymbol \epsilon$ in  Theorem~\ref{Thm:thpt-opt}  may be obtained in several possible ways.  First, traffic measurement is usually performed for practical networks (e.g., at different times of a day).  The difference between the measured peak traffic load during the busy traffic hour (assuming the network remains stable) and the load during an off-peak non-busy traffic hour can serve as a lower bound on the margin during the non-busy traffic hour.  Second, in the case where the arrival rate vector, or an upper bound on the arrival rate vector, can be estimated, one can calculate a lower bound on the margin by solving a linear program (in a distributed manner) maximizing $\min_{n\in \mathcal N,c\in \mathcal C}\epsilon_n^{(c)}$ subject to  $\boldsymbol \lambda+\boldsymbol \epsilon \in \text{int}(\Lambda)$, where $\Lambda$ is characterized in \cite[pp. 32]{Georgiadis-Neely-Tassiulas:2006}.  

Note that the choice of $\mathbf z$ based on $\boldsymbol \epsilon$ as given by Theorem~\ref{Thm:thpt-opt} for throughput optimality  may not be optimal from the viewpoint of improving delay performance.  In Section~\ref{sec:simu}, we shall show using numerical simulations that the proposed enhanced BP algorithms with appropriately chosen parameters  $\mathbf z$ achieve significant delay improvement over the existing BP-based algorithms under small or moderate loading. 

\section{Two Bias Functions}\label{sec:ex}

In this section, we describe two enhanced BP algorithms  resulting from two specific choices for $\mathbf f$.  
These enhanced BP algorithms are designed to ameliorate the myopia of the traditional BP algorithm, thereby improving
its delay performance.  In addition, the enhanced BP algorithms can be implemented in a distributed manner with manageable
complexity.

\begin{table}[h]
\begin{center}
\begin{tabular}{|c|c|c|}
\hline
Algorithm & Computational Complexity & Signaling Overhead \\
  \hline
  BP  (BPbias) &  $\mathcal O(N^2C)$ &  $\mathcal O(N^2C)$\\
  \hline
  BPnxt (BPnxtbias) &  $\mathcal O(N^2C)$ &  $\mathcal O(N^2C)$\\
  \hline
  BPmin  (BPminbias) & $\mathcal O(N^3C)$ &   $\mathcal O(N^3C)$\\
  \hline
  BPSP &  $\mathcal O(N^4C)$ &   $\mathcal O(N^3C)$\\
  \hline
\end{tabular}
  \caption{\small{Comparisons on algorithm complexity.}}
  \label{Tab:comparison}
\end{center}
\end{table}

\subsection{Minimum Next-hop Queue Length Bias (BPnxt)}\label{subsec:DBPnxt}

We consider a local QSI-dependent bias function which is an example of \eqref{eqn:spe-f} and allows the resulting enhanced BP algorithm to incorporate QSI one more hop beyond what is accounted for in the traditional BP algorithm.  Specifically, consider the {\em minimum next-hop queue length bias function} defined as follows. For each node $n\in \mathcal N$ and commodity $c\in \mathcal C$, let $H_{n}^{* (c )}\left(\mathbf u_n^{(c)}\right) \triangleq \min_{k\in \left\{k: (n,k)\in \mathcal L^{(c)}\right\}} u_{k}^{(c)}$ be the minimum next-hop queue length, and 
choose $\eta_{nk}^{(c)} \left(\mathbf u_n^{(c)}\right)$ as follows:
\begin{align}
\eta_{nk}^{(c)} \left(\mathbf u_n^{(c)}\right)=
\begin{cases}
\frac{1}{ N_{\text{nxt},n}^{*(c)}}, & k\in \mathcal N_{\text{nxt},n}^{*(c)}\\
0, & \text{otherwise}
\end{cases} \label{eqn:ex-onehop-eta}
\end{align}
where $ \mathcal N_{\text{nxt},n}^{*(c)}\triangleq \left\{k:u_{k}^{(c)}=H_n^{*(c)}\left(\mathbf u_n^{(c)}\right), (n,k)\in \mathcal L^{(c)} \right\}$ and $  N_{\text{nxt},n}^{*(c)}\triangleq |\mathcal N_{\text{nxt},n}^{*(c)}|$. 
For any given margin  $\boldsymbol \epsilon=(\epsilon_n^{(c)})\succ\mathbf 0$, we choose $\mathbf z=(z_n^{(c)})$, $z_n^{(c)}=  z$ for all $n\in \mathcal N$ and $c\in \mathcal C$, where
\begin{align}
   z\geq\frac{2R_{\max}d_{in}}{\min_{n\in\mathcal N, c\in \mathcal C}\epsilon_n^{(c)}}.\label{eqn:cond-z-ex}
 \end{align} 
Here, $d_{in}$ denotes the largest node in-degree  among all nodes in the graph. Thus, the minimum next-hop queue length bias function is given by: 
\begin{align}
f_n^{(c)}\left(\mathbf u_n^{(c)}\right)=\frac{1}{ z}H_{n}^{* (c )}\left(\mathbf u_n^{(c)}\right).\label{eqn:ex-onehop-f}
\end{align}

Given the bias function in~\eqref{eqn:ex-onehop-f}, and using Algorithm~\ref{Alg:enhanced-DBP}, we now obtain an enhanced BP algorithm, which will be referred to as BPnxt.   We show in Appendix F that for all $z$ satisfying \eqref{eqn:cond-z-ex}, BPnxt stabilizes the network for any $\boldsymbol \lambda$ satisfying $\boldsymbol \lambda+\boldsymbol \epsilon \in \text{int}(\Lambda)$. 


We now briefly discuss the implementation complexity of the BPnxt algorithm,  as illustrated in Table~\ref{Tab:comparison}.    
Since the difference between the traditional BP algorithm and the BPnxt algorithm lies in the backpressure calculation, we focus only on the complexity for implementing~\eqref{eqn:weight}.   Consider the computational complexity first.  For each commodity, each node needs to compute the minimum next-hop queue length bias, which involves a minimization over no more than 
$N$ queue lengths, and the summation of its queue length and the minimum next-hop queue length bias. 
In addition, for each commodity, each node needs to carry out one  subtraction to compute the enhanced BP backpressure of each outgoing link, involving in total no more than $N$ operations for no more than $N$ outgoing links.  Thus, the total computational complexity for the backpressure calculation of BPnxt (over $N$ nodes and $C$ commodities) is $\mathcal O(N^2C)$.  It is easy to see that this is the same in order as the computational complexity for calculating the backpressure in the traditional BP algorithm and the BPbias algorithm in~\cite{Neely-Modiano-Rohrs:2005}.  Next, consider the signaling overhead for the backpressure calculation of BPnxt.  During the signaling phase of each slot, for each commodity, each node needs to report its queue length and the sum of its queue length and its minimum next-hop queue length bias (which can be obtained from the information reported to the node from its next-hop neighbors) to no more than $N$ previous-hop neighbors. Thus, the total signaling overhead is  $\mathcal O(N^2C)$.   Again, it is easy to see that this is the same in order as the signaling overhead for calculating the backpressure in the traditional BP algorithm and  the BPbias algorithm.  
In summary, the BPnxt algorithm has the same order of implementation complexity  as the traditional BP algorithm and the BPbias algorithm.  

\subsection{Minimum Downstream Sum Queue Length Bias (BPmin)}

Next, we consider a global QSI-dependent bias function which is an example of \eqref{eqn:bias-func} and  incorporates multi-hop QSI downstream 
toward the destinations of the respective traffic commodities.  
Consider the {\em minimum downstream sum queue length bias function} defined as follows.  For each node $n\in \mathcal N$ and commodity $c\in \mathcal C$, let $M_{n}^{(c)}$ denote the number of paths  from node $n$ to the destination node of commodity $c$, $dest( c )$.   Let $\mathcal P_{n,m}^{(c)}$ denote the set of nodes on the $m$-th path from node $n$ to node $dest(c)$, where $m=1,\cdots, M_{n}^{(c)}$. The sum queue length excluding node $n$ on path $\mathcal P_{n,m}^{(c)}$  is given by $T_{n,m}^{(c)}\left(\mathbf u\right)\triangleq\sum_{i\in \mathcal P_{n,m}^{(c)}, i\neq n}u_i^{(c)}$.  Let $T_{n}^{* (c )} \left(\mathbf u\right)\triangleq \min_{m=1,\cdots, M_{n}^{( c)}} T_{n,m}^{(c )}\left(\mathbf u\right)$ be the sum queue length {\em excluding node $n$} on the shortest path (in terms of the sum queue length) to $dest( c)$.  For each node $n\in \mathcal N$ and commodity $c\in \mathcal C$, choose $\eta_{nk}^{(c)}(\mathbf u)$ as follows:
\begin{equation}
\eta_{nk}^{(c)}(\mathbf u) = 
\begin{cases}
\frac{1}{ N_{\min,n}^{*(c)}}, & k\in\mathcal P^*, \mathcal P^*\in \mathcal N_{\min,n}^{*(c)} \\
0, & \text{otherwise}
\end{cases}
\label{eqn:ex-sp-eta}
\end{equation}
where 
$$ \mathcal N_{\min,n}^{*(c)}\triangleq \left\{\mathcal P_{n,m}^{(c)}:T_{n,m}^{(c)}\left(\mathbf u\right)= T_{n}^{*(c )}\left(\mathbf u\right), m \in \left\{1,\cdots, M_{n}^{( c)} \right\} \right\}$$ is the set of shortest paths (in terms of the sum queue length) from $n$ to $dest( c)$ and $  N_{\min,n}^{*(c)}\triangleq  | \mathcal N_{\min,n}^{*(c)}| $. As in BPnxt, for any given margin  $\boldsymbol \epsilon=(\epsilon_n^{(c)})\succ \mathbf 0$, we choose $\mathbf z=(z_n^{(c)})$, $z_n^{( c)}=z$ for all $n\in \mathcal N$ and $c\in \mathcal C$, where $z$ satisfies \eqref{eqn:cond-z-ex}.   
Thus, the minimum downstream sum queue length bias function is given by: 
\begin{align}
f_n^{(c)}\left(\mathbf u\right)=\frac{1}{z} T_{n}^{*(c )}\left(\mathbf u\right).
\label{eqn:ex-sp-f}
\end{align}

Given the bias function in~\eqref{eqn:ex-sp-f}, and using Algorithm~\ref{Alg:enhanced-DBP}, we now obtain an enhanced BP algorithm, which will be referred to as BPmin.   We show in Appendix F that for all $z$ satisfying \eqref{eqn:cond-z-ex}, BPmin stabilizes the network for any $\boldsymbol \lambda$ satisfying $\boldsymbol \lambda+\boldsymbol \epsilon \in \text{int}(\Lambda)$. 


We now discuss the implementation complexity of the BPmin algorithm, as illustrated in Table~\ref{Tab:comparison}.   
Again, we focus only on the complexity for implementing the backpressure calculation given in~\eqref{eqn:weight}.  Consider the computational complexity first.  The downstream sum queue length minimization is a shortest path problem where the per-link cost is the instantaneous queue length at the receive node of the link.  Thus, the minimization can be solved in a distributed and parallel manner using the iterative Bellman-Ford algorithm \cite{Datanetwork:92}. 
  In each iteration, for each commodity, each node needs to compute no more than $N$ summations and one minimization over no more than $N$ alternatives.  The number of iterations is no more than $N$.  Thus, for each node and each commodity, the computational complexity of calculating the minimum downstream sum queue length bias is $\mathcal O(N^2)$. 
In addition, for each commodity, each node needs to carry out the summation of its queue length and the minimum downstream sum queue length bias, and then carry out one  subtraction to compute the enhanced BP backpressure of each outgoing link, involving in total no more than $N$ operations for no more than $N$ outgoing links.  Thus, the total computational complexity for the backpressure calculation of BPmin (over $N$ nodes and $C$ commodities) is $\mathcal O(N^3C)$, which is lower than that of the BPSP algorithm in~\cite{YingShakkottai08oncombSPDBPJNET} ($\mathcal O(N^4C)$), but higher than that of the traditional BP algorithm and the BPbias algorithm ($\mathcal O(N^2C)$).  Next, consider the signaling overhead.  For each commodity and each iteration of the Bellman-Ford algorithm, each node needs to report one intermediate shortest path length to no more than $N$ previous-hop neighbors.  When the Bellman-Ford algorithm converges within no more than $N$ iterations,  each node needs to report the sum of its queue length and  the minimum downstream sum queue length bias to no more than $N$ previous-hop neighbors, for each commodity.  
Thus, the total signaling overhead is $\mathcal O(N^3C)$, which is the same in order as that of the BPSP algorithm, but higher in order than that of the traditional BP algorithm and the BPbias algorithm ($\mathcal O(N^2C)$).  
In summary,    the BPmin algorithm has a higher order of implementation complexity than the traditional BP algorithm, the BPbias algorithm and the BPnxt algorithm, but a lower order of implementation complexity than the BPSP algorithm.  

\section{Enhanced Joint Flow Control and BP Algorithms with Arbitrary Arrival Rates}\label{Sec:out}

In this section, we consider the case where  $\boldsymbol \lambda  \notin\Lambda$, and develop a new class of  {\em enhanced joint flow control and BP algorithms}.

\subsection{Transport Layer and Network Layer Queue Dynamics}

When $\boldsymbol \lambda  \notin \Lambda$, the network cannot be stabilized by any feasible resource allocation and routing policy.  Rather, in order to stabilize the network, a flow controller must be placed in front of each network layer queue at the source nodes to control the amount of  data admitted into the network layer. Newly arriving data first enter  transport layer storage reservoirs before being admitted to the network layer\cite{Georgiadis-Neely-Tassiulas:2006}.   Let $Q_{n,\max}^{(c)}$ and $Q_n^{(c)}(t)$ denote  the transport layer  buffer size and the QSI of commodity $c$ data (bits) at node $n$ at the beginning of slot $t$, respectively.  The buffer size $Q_{n,\max}^{(c)}$ can be infinite or finite (possibly zero).  Similarly, for each commodity $c\in \mathcal C$,  we set $Q_{n,\max}^{(c)}=0$ and $Q^{(c)}_n(t)=0$ for all $t$, if node $n$ is the destination node of commodity $c$. 
Let $r^{(c)}_n(t)\Delta\geq 0$ denote the amount of data  admitted to the network layer queue of commodity $c$ data (bits) at node $n$ from the transport layer queue during slot $t$. Thus, we require $r^{(c)}_n(t)\Delta\leq Q^{(c)}_n(t)$. We assume $r^{(c)}_n(t)\leq r^{(c)}_{n,\max}$, where $r^{(c)}_{n,\max}$ is a   positive constant which limits the burstiness of the admitted arrivals to the network layer\cite{Georgiadis-Neely-Tassiulas:2006}.   For each commodity $c\in \mathcal C$ and node $n\in \mathcal N$, $n\neq dest(c)$,  we have  the following transport  and network layer queue dynamics:
\begin{align}
&Q^{(c)}_n(t+1) \label{eqn:queue_dyn-trans}\\
=& \min \left\{ Q^{(c)}_n(t)-r^{(c)}_n(t)\Delta +A^{(c)}_n(t)\Delta, Q^{(c)}_{n,\max} \right\}
\nonumber\\
&U^{(c)}_n(t+1) \label{eqn:queue_dyn-flow} \\
\leq &\left(U^{(c)}_n(t)-\sum_{b\in \mathcal N}\mu^{(c)}_{nb}(t)\Delta\right)^+ +r^{(c)}_n(t)\Delta+\sum_{a\in \mathcal N}\mu^{(c)}_{an}(t)\Delta.
\nonumber
\end{align}

\subsection{Enhanced Joint Flow Control and BP Algorithm}

The goal of the flow control is to support a portion of the traffic demand $\boldsymbol \lambda$ which maximizes the sum of utilities when $\boldsymbol \lambda  \notin \Lambda$.  Let $h^{(c)}_n(\cdot)$ be the utility function associated with the input commodity $c$ data at node $n$.  Assume  $h^{(c)}_n(\cdot)$ is non-decreasing, concave, continuously differentiable and non-negative. Define  a  $\boldsymbol \theta$-optimal admitted rate as follows:
\begin{align}
\overline {\mathbf r}^*(\boldsymbol \theta)=\arg\max_{\overline{\mathbf r}}\quad  & \sum_{n\in \mathcal N,c\in \mathcal C}h^{(c)}_n\left(\overline r^{(c)}_n\right)\label{eqn:eps-opt-prob}\\
s.t. \quad &  \overline{\mathbf r}+\boldsymbol \theta \in \Lambda\label{eqn:stable}\\
& \mathbf 0 \preceq \overline{\mathbf r} \preceq \boldsymbol \lambda\label{eqn:demand}
\end{align}
where $ \overline{\mathbf r}^*(\boldsymbol \theta)\triangleq (\overline r^{*(c)}_n(\boldsymbol \theta))$, $ \overline{\mathbf r}\triangleq (\overline r^{(c)}_n)$ and $\mathbf 0\preceq\boldsymbol \theta\triangleq (\theta^{(c)}_n)\in \Lambda $.  The constraint in \eqref{eqn:stable} ensures that the admitted rate to the network layer is bounded away from the boundary of the network stability region by $\boldsymbol \theta$. 
Due to the non-decreasing property of the utility functions, the maximum sum utility over all $\boldsymbol \theta$ is achieved at $\overline {\mathbf r}^*(\mathbf 0)$ when  $\boldsymbol \theta=\mathbf 0$.

We now develop a new class of enhanced joint flow control and BP algorithms that yield a throughput vector which can be arbitrarily close to the optimal solution $\overline {\mathbf r}^*(\mathbf 0)$. Following \cite[pp. 90]{Georgiadis-Neely-Tassiulas:2006}, we introduce the auxiliary variables $\gamma_n^{(c)}(t)$ and the virtual queues $Y_n^{(c)}(t)$ for all $n\in \mathcal N$ and $c\in \mathcal C$.\footnote{Note that the flow control part  of Algorithm~\ref{Alg:enhanced-flow-DBP} is the same as that in the traditional joint flow control and BP algorithm in \cite[pp. 90]{Georgiadis-Neely-Tassiulas:2006}. The difference lies in the resource control and routing part. We describe the flow control part here for the purpose of completeness.}

\begin{Alg}[Enhanced Joint Flow Control and BP]  Let a set of bias functions $\mathbf f\succeq 0$ be given.
In each slot $t$, the 
controllers observe the network layer QSI $\mathbf U(t)$, virtual QSI $\mathbf Y(t)$ and
topology state $S(t)$, and performs the following flow control, resource
allocation and routing actions.

\textbf{Flow Control}: For each node $n\in \mathcal N$ and commodity $c\in \mathcal C$, the flow controller observes the transport layer QSI $Q^{(c)}_n(t)$ and the virtual QSI $Y^{(c)}_n(t)$, and chooses the admitted data rate at slot $t$, which also serves as the output rate of the corresponding virtual queue:
\begin{align}
r^{(c)}_n(t)=
\begin{cases}
\min\left\{Q^{(c)}_n(t)/\Delta, r^{(c)}_{n,\max}\right\},  & Y^{(c)}_n(t)>U^{(c)}_n(t)\\
0,& \text{otherwise}.
\end{cases}\nonumber
\end{align}
The flow controller then chooses the auxiliary variable, which serves as the input rate  to the corresponding virtual queue:
\begin{align}
\gamma^{(c)}_n(t)=\arg\max_{\gamma}\quad & M h^{(c)}_n(\gamma)-Y^{(c)}_n(t)\gamma\Delta\label{eqn:solu-gamma}\\
s.t. \quad & 0\leq \gamma\leq r^{(c)}_{n,\max}\nonumber
\end{align}
where $M>0$ is a control parameter which affects the utility-delay tradeoff of the algorithm. 
Based on the chosen $r^{(c)}_n(t)$ and $\gamma^{(c)}_n(t)$, the  transport
layer QSI is updated according to \eqref{eqn:queue_dyn-trans} and the virtual QSI is  updated according to: 
\begin{align}
Y^{(c)}_n(t+1)=& \left(Y^{(c)}_n(t)-r^{(c)}_n(t)\Delta\right)^+ +\gamma^{(c)}_n(t)\Delta\label{eqn:queue_dyn-virtual}
\end{align}
where $Y^{(c)}_n(0)=0$ for all $n\in\mathcal N$ and $c\in \mathcal C$. 

\textbf{Resource Allocation and Routing}:  Same as Algorithm~\ref{Alg:enhanced-DBP}. \label{Alg:enhanced-flow-DBP}
\end{Alg}

\subsection{Performance Analysis}

The following theorem summarizes the utility-delay tradeoff of Algorithm~\ref{Alg:enhanced-flow-DBP}. For notational simplicity, we assume $\Delta=1$.

\begin{Thm}[Utility-Delay Tradeoff of Alg.~\ref{Alg:enhanced-flow-DBP}] For an arbitrary arrival rate vector, for any transport layer buffer size, and for any control parameter $M>0$, given  $\boldsymbol\epsilon \triangleq(\epsilon_n^{(c)}) \in \text{int}(\Lambda)$, there exist   
$\boldsymbol \delta\triangleq (\delta_n^{(c)})\succ  \mathbf 0$ such that
$\boldsymbol \epsilon + \boldsymbol \delta \in 
\Lambda$, and $\mathbf z\triangleq (z_n^{(c)})\succ \mathbf 0$  such that $\boldsymbol \epsilon_{\mathbf z}\triangleq\left(\frac{2R_{\max}L^{(c)}}{z_n^{(c)}}\right)\preceq \boldsymbol \epsilon$.  Then, the queueing network under
Algorithm 2, with the bias  functions chosen according to~\eqref{eqn:bias-func}, satisfies:
\begin{align}
\limsup_{t\to\infty}\frac{1}{t}\sum_{\tau=0}^{t-1}\sum_{n\in \mathcal N,c\in \mathcal C} \mathbb
E[U^{(c)}_n(\tau)]&\leq \frac{N\hat B+MH_{\max}}{
\hat\beta_{\mathbf z}}\label{eqn:enhanced-flow-DBP-U}\\
\liminf_{t\to\infty}\sum_{n\in \mathcal N,c\in \mathcal C} h^{(c)}_n\left(\overline r^{(c)}_n(t)\right)\geq &
\sum_{n\in \mathcal N,c\in \mathcal C}
h^{(c)}_n\left( \overline r^{*(c)}_n\left(\boldsymbol \epsilon_{\mathbf z}\right)\right)\nonumber\\
&-\frac{N\hat B}{M}
\label{eqn:enhanced-flow-DBP-g}
\end{align}
where 
\begin{align}
&\hat B\triangleq \frac{1}{2N}\sum_{n\in \mathcal N}\left((\mu^{out}_{n,
\max})^2+(r_{n,\max}+\mu^{in}_{n,\max})^2+2(r_{n,\max})^2\right)\label{eqn:bound-B-flow}\\
&\hat \beta_{\mathbf z}\triangleq \sup_{\substack{\{(\boldsymbol \epsilon,\boldsymbol \delta):  \boldsymbol \epsilon\succeq \boldsymbol \epsilon_{\mathbf z}, \boldsymbol \delta \succ \mathbf 0, \\  \quad \ \boldsymbol \epsilon+\boldsymbol \delta\in  \Lambda\}}}\min_{ n\in \mathcal N,c\in \mathcal C}\left\{\epsilon_n^{(c)}+\delta_n^{(c)}
-\frac{2R_{\max}L^{(c)}}{z_n^{(c)}}\right\}\label{eqn:bound-e-flow}
\end{align} 
with $r_{n,\max} \triangleq  \sum_{c\in \mathcal C}
r^{(c)}_{n,\max}$,  $H_{\max}\triangleq\sum_{n\in \mathcal N,c\in \mathcal C}
h^{(c)}_n\left(r^{(c)}_{n,\max}\right)$, and  $\overline r^{(c)}_n(t)\triangleq \frac{1}{t}\sum_{\tau=0}^{t-1}\mathbb E[ r^{(c)}_n(\tau)]$.
\label{Thm:flow-control}
\end{Thm}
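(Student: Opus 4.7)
The plan is to prove Theorem \ref{Thm:flow-control} by a drift-plus-penalty Lyapunov argument on the joint state $(\mathbf U(t),\mathbf Y(t))$, extending the drift analysis used for Theorem \ref{Thm:thpt-opt} to incorporate the virtual queues and auxiliary variables in the manner of Chapter~5 of \cite{Georgiadis-Neely-Tassiulas:2006}. Define
$L(\mathbf U,\mathbf Y)\triangleq \tfrac{1}{2}\sum_{n,c}\bigl((U_n^{(c)})^2+(Y_n^{(c)})^2\bigr)$
and the one-slot conditional drift-minus-utility quantity
$\Phi(t)\triangleq \mathbb E[L(\mathbf U(t+1),\mathbf Y(t+1))-L(\mathbf U(t),\mathbf Y(t))\mid \mathbf U(t),\mathbf Y(t)] - M\sum_{n,c}\mathbb E[h_n^{(c)}(\gamma_n^{(c)}(t))\mid\cdot]$.
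Squaring \eqref{eqn:queue_dyn-flow} and \eqref{eqn:queue_dyn-virtual} and bounding the squared increments using $\mu^{out}_{n,\max}$, $\mu^{in}_{n,\max}$, and $r_{n,\max}$ yields, after collecting constants into $N\hat B$, a quadratic drift bound involving only linear terms in $U_n^{(c)}(t)$ and $Y_n^{(c)}(t)$ multiplied by the decision-dependent net arrival/departure rates, plus $-M\sum h_n^{(c)}(\gamma_n^{(c)}(t))$.

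I would then decouple the $\boldsymbol\mu$-, $r$-, and $\gamma$-parts of this bound. The flow-control step of Algorithm~\ref{Alg:enhanced-flow-DBP} is exactly the one that minimizes the $r$- and $\gamma$-dependent pieces (this is the standard argument of \cite{Georgiadis-Neely-Tassiulas:2006}). The routing piece rearranges to $\sum_{(a,b),c}\mu_{ab}^{(c)}(t)(U_a^{(c)}(t)-U_b^{(c)}(t))$, which Algorithm~\ref{Alg:enhanced-DBP} does not maximize directly; rather it maximizes the enhanced weight $\sum_{(a,b),c}\mu_{ab}^{(c)}[(U_a^{(c)}+f_a^{(c)})-(U_b^{(c)}+f_b^{(c)})]$. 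I therefore add and subtract the bias increment, so that the RHS of the drift bound becomes the enhanced-weight objective (maximized by the algorithm) plus a residual cross-term $-\sum_{(a,b),c}\mu_{ab}^{(c)}(t)[f_a^{(c)}(\mathbf U)-f_b^{(c)}(\mathbf U)]$. Using $\mu_{ab}^{(c)}\leq R_{\max}\mathbf 1\{(a,b)\in\mathcal L^{(c)}\}$, $\eta_{nk}^{(c)}\in[0,1]$, and interchanging the order of summation over links with the index $k$ inside \eqref{eqn:bias-func}, this residual is bounded by $\sum_{n,c}(2R_{\max}L^{(c)}/z_n^{(c)})U_n^{(c)}(t)=\sum_{n,c}\epsilon_{\mathbf z,n}^{(c)}U_n^{(c)}(t)$, producing exactly the $\boldsymbol\epsilon_{\mathbf z}$ correction that appears in \eqref{eqn:bound-e-flow}.

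Next, invoking the randomized-policy characterization of $\Lambda$ from \cite{Georgiadis-Neely-Tassiulas:2006}, for any feasible $(\boldsymbol\epsilon,\boldsymbol\delta)$ in the sup of \eqref{eqn:bound-e-flow} there exists a stationary randomized policy $\pi^{*}$ whose expected service rates satisfy $\mathbb E^{\pi^{*}}[\sum_b\mu_{nb}^{(c)}-\sum_a\mu_{an}^{(c)}]=\overline r_n^{*(c)}(\boldsymbol\epsilon_{\mathbf z})+\epsilon_n^{(c)}+\delta_n^{(c)}$, paired with the flow-control choice $r_n^{(c)}=\gamma_n^{(c)}=\overline r_n^{*(c)}(\boldsymbol\epsilon_{\mathbf z})$. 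Because Algorithm~\ref{Alg:enhanced-flow-DBP} minimizes the (decoupled) drift bound over all feasible decisions, it does at least as well as $\pi^{*}$, giving
$\Phi(t)\leq N\hat B - \hat\beta_{\mathbf z}\sum_{n,c}\mathbb E[U_n^{(c)}(t)\mid\cdot] - M\sum_{n,c}h_n^{(c)}(\overline r_n^{*(c)}(\boldsymbol\epsilon_{\mathbf z}))$
after taking expectations and optimizing the sup. Telescoping from $0$ to $t-1$, dividing by $t$, and using $L\geq 0$ together with $\sum h_n^{(c)}(\gamma_n^{(c)}(t))\leq H_{\max}$ yields \eqref{eqn:enhanced-flow-DBP-U}; a parallel argument that retains the $-M\sum h_n^{(c)}(\gamma_n^{(c)}(t))$ term, combined with concavity of $h_n^{(c)}$ (Jensen's inequality on time averages), the virtual-queue/auxiliary-variable identity $\liminf_t\overline\gamma_n^{(c)}(t)\leq\liminf_t\overline r_n^{(c)}(t)$ from stability of $\mathbf Y$, and the transport-layer bound $\overline r_n^{(c)}(t)\leq\lambda_n^{(c)}$, yields \eqref{eqn:enhanced-flow-DBP-g}.

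The main obstacle, as in Theorem~\ref{Thm:thpt-opt}, will be the careful bookkeeping of the bias cross-term. Since $f_n^{(c)}$ depends on the \emph{global} QSI through weights $\eta_{nk}^{(c)}(\mathbf U)$, a given $u_k^{(c)}$ can enter the biases of many nodes; reassembling the double sum over links and $k$-indices so that every contribution collapses onto a single $(2R_{\max}L^{(c)}/z_k^{(c)})U_k^{(c)}(t)$ coefficient, while respecting $\eta\in[0,1]$ and the per-commodity link cap $L^{(c)}$, is exactly what permits the slack $\boldsymbol\epsilon+\boldsymbol\delta-\boldsymbol\epsilon_{\mathbf z}$ to retain a strictly positive drift coefficient $\hat\beta_{\mathbf z}$. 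The remaining virtual-queue and utility manipulations are standard and mirror those in \cite[Ch.~5]{Georgiadis-Neely-Tassiulas:2006}.
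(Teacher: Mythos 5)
Your proposal follows essentially the same approach as the paper's Appendix G: a joint Lyapunov function on $(\mathbf U,\mathbf Y)$, a drift-minus-penalty bound obtained by squaring the network-layer and virtual-queue dynamics, bounding the bias cross-term by reorganizing the sum over links and the index $k$ in \eqref{eqn:bias-func}, comparison against a stationary randomized policy built from the randomized-policy characterization of $\Lambda$, the standard Lyapunov optimization theorem, and the $\gamma\to r$ transfer via virtual-queue stability and monotonicity of $h_n^{(c)}$. One minor bookkeeping imprecision: the bias cross-term $\sum_{(a,b),c}\mu_{ab}^{(c)}(f_a^{(c)}-f_b^{(c)})$ is bounded by $\sum_{n,c}(R_{\max}L^{(c)}/z_n^{(c)})U_n^{(c)}$, not by $\sum_{n,c}(2R_{\max}L^{(c)}/z_n^{(c)})U_n^{(c)}$ as you wrote; the factor of $2$ in $\boldsymbol\epsilon_{\mathbf z}$ arises because the bound is applied twice, once for the algorithm's $\boldsymbol\mu$ and once for the comparison policy's $\tilde{\boldsymbol\mu}$.
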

\begin{proof} Please refer to Appendix  G.
\end{proof}

\begin{Rem} 
Theorem~\ref{Thm:flow-control} should be interpreted as
follows. When $\mathbf 0 \prec \boldsymbol \epsilon \in
\text{int}(\Lambda)$, one can choose a {\em finite} $\mathbf
z \succ \mathbf 0$ such that $\boldsymbol \epsilon_{\mathbf
z} \preceq
\boldsymbol \epsilon$.
In this case, the limiting average total queue size under Algorithm 2 is
upper bounded as in~\eqref{eqn:enhanced-flow-DBP-U}. At the same time,
the limiting sum utility is lower bounded as
in~\eqref{eqn:enhanced-flow-DBP-g}, thereby specifying a utility-delay
tradeoff. 
In Section~\ref{sec:simu}, we will demonstrate numerically
that Algorithm 2 indeed yields a better utility-delay tradeoff than the
traditional flow control-BP algorithm  in \cite[pp. 90]{Georgiadis-Neely-Tassiulas:2006}.
When $\boldsymbol \epsilon = \mathbf 0$, i.e., no margin is given,
$z_n^{(c)}$ is chosen to be infinity for all $n\in \mathcal N$ and $c\in \mathcal C$ (i.e., $\mathbf f = \mathbf 0$). In this case, Algorithm 2 reduces to
the traditional flow control-BP algorithm, and
Theorem~\ref{Thm:flow-control} reduces to the traditional utility-delay
tradeoff result (Theorem 5.8 in~\cite{Georgiadis-Neely-Tassiulas:2006}).
\end{Rem}

\section{Numerical Experiments}
\label{sec:simu}


In the numerical simulations, we consider the same simulation setup as in \cite{YingShakkottai08oncombSPDBPJNET} for ease of comparison. Specifically, we consider a  network with 64 nodes  and four clusters as shown in Fig.~\ref{Fig:network}. Each cluster is a $4\times4$ regular grid with two randomly added links. Two adjacent clusters are connected by two links. All links are bidirectional with a maximum transmission rate of one packet/slot  for both directions.  We consider the wireline case, in which all links can transmit simultaneously.
As in \cite{YingShakkottai08oncombSPDBPJNET}, we consider 8 commodities corresponding to the following source-destination pairs: $((1,3),(2,5))$, $((2,3),(2,7))$, $((2,2),(1,6))$, $((3,4),(2,7))$, $((1,1),(1,7))$, $((4,3),(5,4))$, $((4,6),(6,6))$ and $((5,3),(5,6))$.  
The packet arrival processes are Poisson. We compare the performance of our enhanced BP algorithms discussed in Section~\ref{sec:ex}, i.e., BPnxt and BPmin, and the shortest path biased versions of our enhanced BP algorithms, i.e., BPnxtbias and BPminbias,\footnote{As in BPbias, we add two QSI-independent shortest path bias terms $B_a$ and $B_b$ to the instantaneous QSI at nodes $a$ and $b$  in  \eqref{eqn:weight}, where $B_a$ and $B_b$ are   parameterized by the per-link cost $B$.} with several baseline schemes, such as the traditional BP algorithm \cite{Tassiulas-Ephremides:1992-2,Georgiadis-Neely-Tassiulas:2006}, the BPbias algorithm  \cite{Neely-Modiano-Rohrs:2005,Georgiadis-Neely-Tassiulas:2006}, and the BPSP algorithm \cite{YingShakkottai08oncombSPDBPJNET}. 
In the simulations, we use the average number of packets in the network   as the performance measure, a quantity which is linearly related to the average delay by Little's Law.  The average performance is evaluated over $10^5$ time slots.

\begin{figure}[t]
\begin{center}
\includegraphics[height=3.5cm]{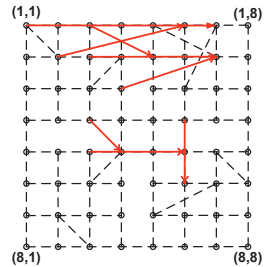}
\caption{\small{Network topology and commodities\cite{YingShakkottai08oncombSPDBPJNET}.  $d_{in}= 5$ and $L=224$. }}\label{Fig:network}
\end{center}
\end{figure}

\begin{figure}[t]
\begin{center}
\includegraphics[height=6cm, width=8.5cm]{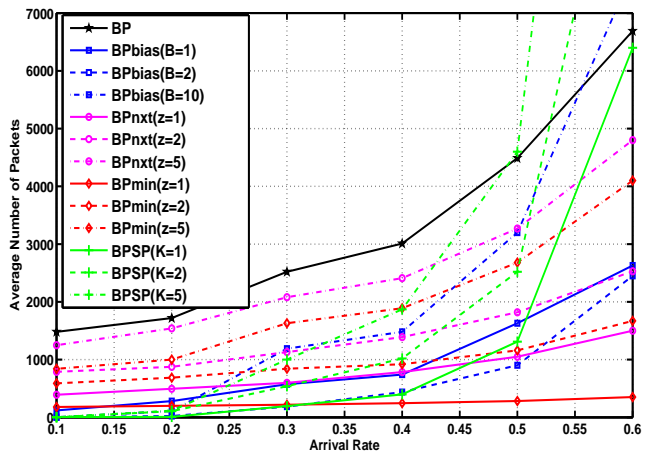}
\caption{\small{Delay of BPnxt  and BPmin at $  z=1,2,5$.}}\label{Fig:proposed baselines}
\end{center}
\end{figure}

\begin{figure}[t]
\begin{center}
\includegraphics[height=6cm, width=8.5cm]{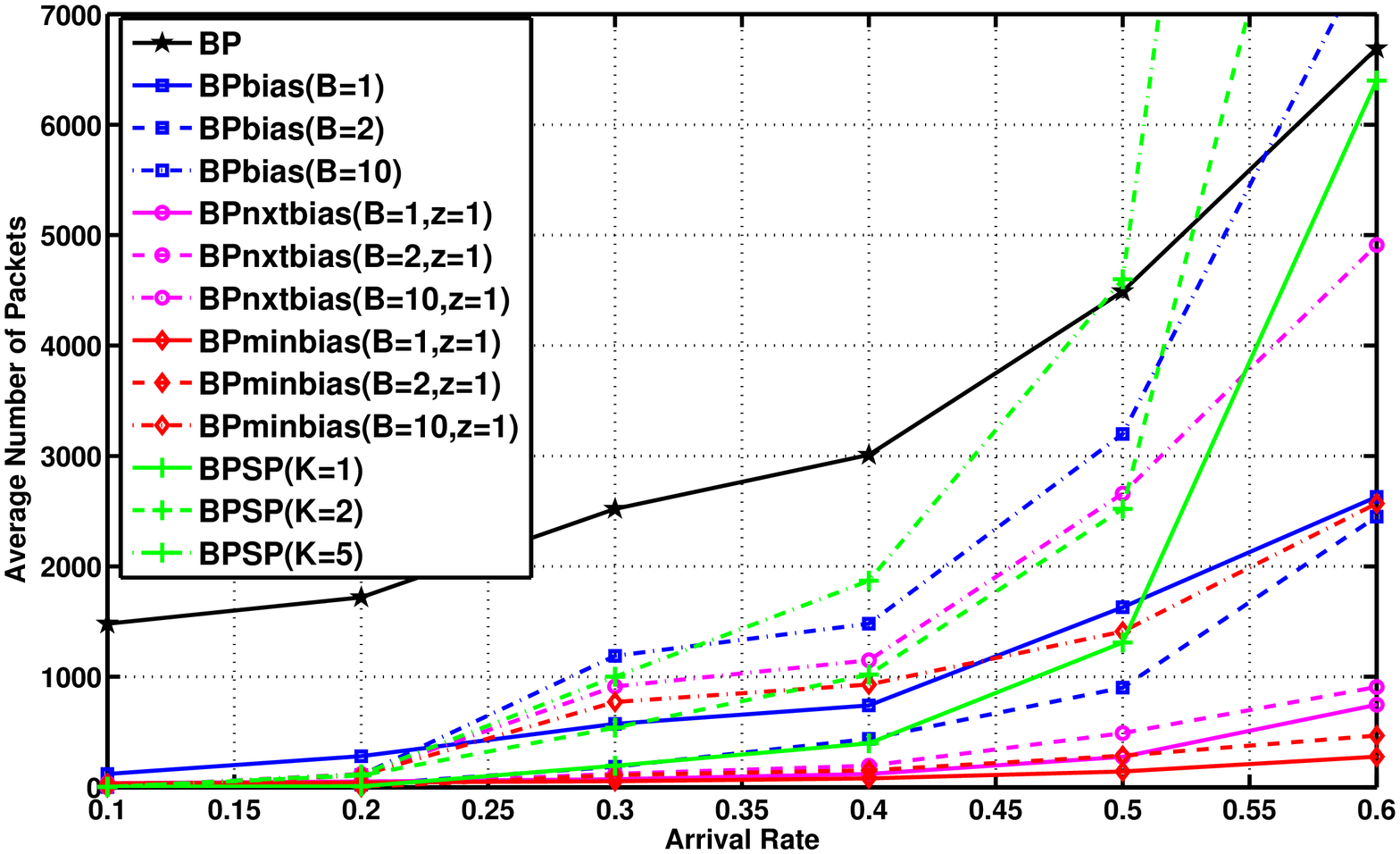}
\caption{\small{Delay   of BPnxtbias  and BPminbias  at $z=1$.}}\label{Fig:proposedbias1}
\end{center}
\end{figure}

\begin{figure}[t]
\begin{center}
\includegraphics[height=6cm, width=8.5cm]{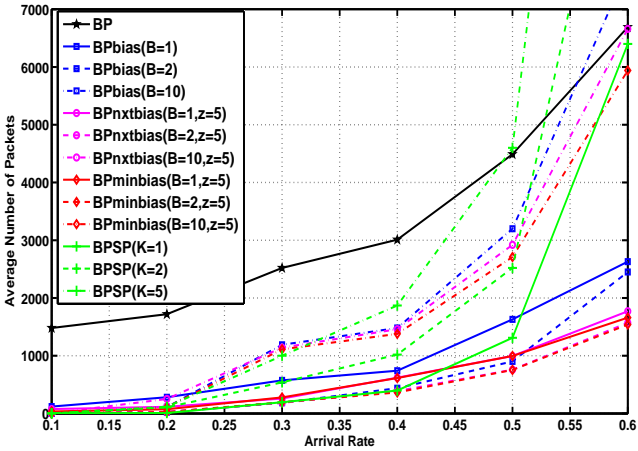}
\caption{\small{Delay  of BPnxtbias  and BPminbias at $z=5$.}}\label{Fig:proposedbias5_224}
\end{center}
\end{figure}

\subsection{Enhanced BP Algorithms}\label{subsec:DBP}

Figures~\ref{Fig:proposed baselines},~\ref{Fig:proposedbias1}  and~\ref{Fig:proposedbias5_224}  show the average number of packets in the
network versus the arrival rate  in the light and moderate loading regimes.  Here, all
commodities have the same arrival rate, i.e., $\lambda^{(c)}_n=\lambda$ for $n=src(c)$ and $c\in \mathcal C$, where $src(c)$ denotes the source node of commodity $c$.  
First, from Fig.~\ref{Fig:proposed baselines}, we can see that   with the minimum next-hop queue length bias and  the minimum downstream sum queue length bias, the delay performance of the traditional BP  can be improved, by using BPnxt  ($  z=1,2,5$) and BPmin  ($z=1,2,5$), respectively. 
It can be verified that arrival rates $\lambda\leq 2.5$ can be stabilized \cite[p. 32]{Georgiadis-Neely-Tassiulas:2006}.  Thus, when $\lambda \leq 0.5$, $z=5$  satisfies the sufficient condition in \eqref{eqn:cond-z-ex} for throughput optimality of BPnxt and BPmin.  On the other hand, as discussed in Section~\ref{Subsec:DBP-performance}, the choice of  $z$  satisfying \eqref{eqn:cond-z-ex} is not necessarily optimal for delay performance.
From Fig.~\ref{Fig:proposed baselines}, it can be seen that the delay for BPnxt (BPmin) with $   z=1$  is at most $28.7\%$ ($12.1\%$) of the delay for BP  for $\lambda=0.1, \cdots, 0.6$. When $   z$  increases,  the delay performance  gain of  BPnxt  (BPmin) over BP decreases, as the effect of the queue length bias reduces.  
In addition,   we see that the delay performance of BPbias and BPSP is very sensitive to the choices of parameters $B$ and $K$, where $B$ is the per-link cost in obtaining the shortest path bias and $K$ is the control parameter for traffic splitting. 
Specifically, when $B$ and $K$ are small, the delay performance improvement of BPbias and BPSP over traditional BP is not significant. When $B$ and $K$ are  large, as compared with traditional BP, BPbias and BPSP have significantly lower delay in the small traffic loading regime.  On the other hand, in the moderate (and heavy) traffic loading regime, a small delay reduction or even a delay increase is observed.  
This is because BPbias and BPSP, by improving delay performance using the shortest path concept, may easily  cause heavy congestion on the shorter paths when $B$ and $K$ are large or when the traffic load is high. Thus, it is difficult in general to determine beforehand the proper $B$ and $K$ parameters. 
In contrast, for small $   z$, the delay of the proposed BPnxt and BPmin algorithms, which improve delay performance by exploiting (dynamic) downstream congestion information, is small across the light and moderate loading regimes.

Next, from Fig.~\ref{Fig:proposedbias1}  and Fig.~\ref{Fig:proposedbias5_224}, we can see that with the additional
minimum next-hop queue length bias and  minimum downstream sum queue length bias, the delay performance of  BPbias ($B=1,2,10$) can be further improved, under the same parameter  $B$.  This supports our conjecture that by considering more QSI, we can substantially improve the delay performance of BP-based algorithms. 
Similar to BPbias, the delay performance of BPnxtbias and BPminbias is also sensitive to the choice of $B$.  However, with  small $B$ and $   z$, BPnxtbias  (BPminbias) can achieve good delay performance across the light and moderate loading regimes.
For example,  the delay of BPnxtbias (BPminbias) at $   z=1$ and  $B=1$ is at most $11.2\%$ ($4.1\%$) of the delay of BP  for $\lambda=0.1, \cdots, 0.6$. 


\subsection{Enhanced Joint Flow Control and BP Algorithms}

\begin{figure}[t]
\begin{center}
\includegraphics[height=6cm, width=8.5cm]{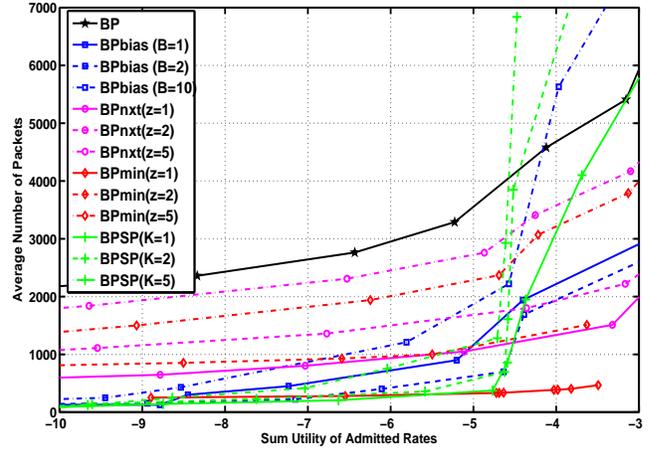} 
\caption{\small{Utility-delay tradeoff of BPnxt   and BPmin at $z=1,2,5$.}} \label{Fig:flow1}
\end{center}
\end{figure}

\begin{figure}[t]
\begin{center}
\includegraphics[height=6cm, width=8.5cm]{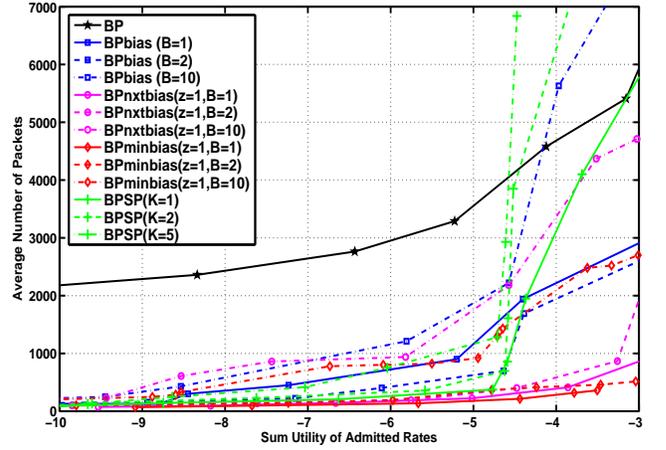}
\caption{\small{Utility-delay tradeoff of BPnxtbias  and BPminbias  at $z=1$.}} \label{Fig:flow2}
\end{center}
\end{figure}

Fig.~\ref{Fig:flow1} and Fig.~\ref{Fig:flow2}  illustrate the average number of packets in the network versus the sum utility of the admitted rate over all commodities.  We consider proportional fairness by choosing the logarithmic utility function. Specifically, for each commodity $c\in \mathcal C$, we choose $h_n^{(c)}(x)=\log(x)$ for $n=src(c)$ and $h_n^{(c)}(x)=0$ for all $n\neq src(c)$.   
We choose  $\lambda^{(c)}_{n}=3$ and $r^{(c)}_{n,\max}=1$ for  $n=src(c)$. The utility-delay tradeoff curve is obtained by choosing different control parameters $M$.  As in Section~\ref{subsec:DBP}, we see that the utility-delay tradeoff for BPnxt and BPmin are significantly improved when compared with that of traditional BP.  In addition, the performance for BPnxtbias and BPminbias are very competitive when compared with BPbias and BPSP.

\subsection{Discussion}

\begin{table}[h]
\begin{center}
\begin{tabular}{|c|c|}
\hline
Algorithm &  Normalized Simulation Time \\
  \hline
  BP  (BPbias) & 1 (1.1) \\
  \hline
  BPnxt (BPnxtbias) & 1.8 (1.9) \\
  \hline
  BPmin  (BPminbias) & 12.6 (12.7) \\
  \hline
  BPSP &110.5\\
  \hline
\end{tabular}
  \caption{\small{Comparisons on average normalized simulation time.}}
  \label{Tab:comparison-time}
\end{center}
\end{table}

Table~\ref{Tab:comparison-time} shows the average normalized simulation time  of all the considered algorithms for the network topology and commodities in Fig.~\ref{Fig:network}.  We can see that the BP, BPbias, BPnxt, BPnxtbias, BPmin, BPminbias and BPSP  algorithms are in increasing order of complexity.  
When only low computation cost is acceptable, BPnxtbias, i.e., the combination of a small shortest path bias (i.e., small $B$)  and the minimum next-hop queue length bias with small parameter (i.e., small $  z$), seems to result in  delay performance close to or better than that of BPbias and BPSP across the light and moderate loading regimes. The small shortest path bias captures essential path length information,  and the minimum next-hop queue length bias with small parameter $z$ captures essential congestion information on each path. Both bias terms help to correct the myopic nature of the traditional BP algorithm. When high computation cost is acceptable, e.g., in small networks, we can consider BPmin with small $z$ or BPminbias  with small $B$ and  $z$  for further delay performance  improvement.

\section{Conclusion}

In this paper,  we show that the asymptotically delay optimal control resembles the BP algorithm in basing resource allocation and routing on a backpressure calculation, but differs from the BP algorithm in the form of the backpressure calculation employed. 
Motivated by this connection, we introduce a new class of enhanced BP-based algorithms which incorporate  a general queue-dependent bias function into the BP backpressure term to substantially improve  delay performance. We  demonstrate the throughput optimality and the utility-delay tradeoff for the proposed algorithms.  
We further elaborate on two specific algorithms within this class,  which have demonstrably improved delay performance while maintaining acceptable implementation complexity.

\section*{Appendix A: Proof of Lemma~\ref{Lem:Bellman}}

First,  define a real valued function
\begin{align}
h(s,\mathbf u)\triangleq\min_{I, \boldsymbol \nu
} \left\{\sum_{\substack{n\in \mathcal N\\c\in \mathcal C}}u^{(c)}_n+\sum_{\mathbf u'\in \mathcal U}P_{(s,\mathbf u),\mathbf u'}(I, \boldsymbol \nu)V(\mathbf u')
\right\}-d.\label{eqn:h-def}
\end{align}
By \eqref{eqn:h-def} and \eqref{eqn:Bellman}, we have 
$d +V(\mathbf u)=\sum_{s\in \mathcal S}\Pr[S=s](h(s,\mathbf u)+d)$, implying: 
\begin{align}
V(\mathbf u)=\sum_{s\in \mathcal S}\Pr[S=s]h(s,\mathbf u). \label{eqn:v-def}
\end{align}
Substituting \eqref{eqn:v-def} into \eqref{eqn:h-def}, we have: 
\begin{align}
&d +h(s,\mathbf u)\nonumber\\
\stackrel{(a)}{=}&\min_{I, \boldsymbol \nu} \left\{\sum_{\substack{n\in \mathcal N\\c\in \mathcal C}}u^{(c)}_n+\sum_{s',\mathbf u'}P_{(s,\mathbf u),(s',\mathbf u')}(I, \boldsymbol \nu)h(s',\mathbf u')
\right\},\label{eqn:Bellman-full}
\end{align}
where (a) is due to \eqref{eqn:tran-prob}. In addition, we have 
$\mathbb E[h(S(t),\mathbf U(t)|(S(0),\mathbf U(0))=(s,\mathbf u),\omega
]
\stackrel{(b)}{=}\sum_{\mathbf u'}\Pr[\mathbf U(t)=\mathbf u'|(S(0),\mathbf U(0))=(s,\mathbf u),\omega
]\times\left(\sum_{s'}\Pr[S(t)=s']h(s',\mathbf u')\right)
\stackrel{(c)}{=}\mathbb E[V(\mathbf U(t))|(S(0),\mathbf U(0))=(s,\mathbf u),\omega
]$, 
where (b) is due to the  i.i.d. property of $S(t)$ and (c) is due to \eqref{eqn:v-def}. 
Thus, by \eqref{eqn:Bellman-bound}, we have: 
\begin{align}
\lim_{t\to \infty}\frac{1}{t}\mathbb E[h(S(t),\mathbf U(t)|(S(0),\mathbf U(0))=(s,\mathbf u),\omega
]=0.\label{eqn:Bellman-full-bound}
\end{align}
Note that \eqref{eqn:Bellman-full} and \eqref{eqn:Bellman-full-bound} correspond to conditions (4.121) and (4.122) of Proposition 4.6.1 in \cite[pp.  254]{Bertsekas:2007}.  
In addition, $\mathcal S\times \mathcal U$ is countably infinite and $\mathcal I\times \mathcal R$ is finite. 
Thus,  by Proposition 4.6.1 \cite[pp. 254]{Bertsekas:2007}, we can prove Lemma~\ref{Lem:Bellman}. 

\section*{Appendix B: Proof of Lemma~\ref{Lem:asym-opt-policy}}

Let $\mathbf u'\triangleq\mathbf U(t+1)$, $\mathbf u \triangleq \mathbf U(t)$, $A_n^{(c)}\triangleq A_n^{(c)}(t)$ and $\nu^{(c)}_{ab}\triangleq \nu^{(c)}_{ab}(t)$. The vector form of \eqref{eqn:queue_dyn-mdp} can be written as
$\mathbf u'=\mathbf u-
\left(\sum_{b\in \mathcal N}\nu^{(c)}_{nb}\Delta\right)+\left(A_n^{(c)}\Delta\right)+\left(\sum_{a\in \mathcal N}\nu^{(c)}_{an}\Delta\right)$. 
 By Taylor's theorem for multivariate functions, we have $V(\mathbf u')= 
V(\mathbf u)+\Delta\sum_{\substack{n\in \mathcal N, c\in \mathcal C}} V_{n,(c)}'(\mathbf u)\left(A_n^{(c)}+\sum_{a\in \mathcal N}\nu^{(c)}_{an}-\sum_{b\in \mathcal N}\nu^{(c)}_{nb}\right)+o(\Delta)$\cite{CuitwohopIT2015}. 
Thus, we have: 
\begin{align}
&\sum_{\mathbf u'\in \mathcal U}P_{(s,\mathbf u),\mathbf u'}(I, \boldsymbol \nu)V(\mathbf u')=V(\mathbf u)+\Delta\sum_{n\in \mathcal N, c\in \mathcal C} V_{n,(c)}'(\mathbf u)\lambda_n^{(c)}\nonumber\\
&-\Delta\sum_{(a,b)\in \mathcal L}\sum_{c\in \mathcal C}\nu_{ab}^{(c)}\left(V_{a,(c)}'(\mathbf u)- V_{b,(c)}'(\mathbf u)\right)+o(\Delta).\label{eqn:proof:Taylor-exp-}
\end{align}
By \eqref{eqn:bellman-opt-policy} and \eqref{eqn:proof:Taylor-exp-}, we have:  
\begin{align}
&\omega^*(s,\mathbf u)=\arg\min_{I, \boldsymbol \nu
}\sum_{\mathbf u'\in \mathcal U}P_{(s,\mathbf u),\mathbf u'}(I, \boldsymbol \nu)V(\mathbf u')=\nonumber\\
&\arg\min_{I, \boldsymbol \nu
}\left(-\Delta\sum_{(a,b)\in \mathcal L}\sum_{c\in \mathcal C}\nu_{ab}^{(c)}\left(V_{a,(c)}'(\mathbf u)- V_{b,(c)}'(\mathbf u)\right)+o(\Delta)\right).\label{eqn:proof-prob-opt}
\end{align}
Let $\omega^*(s,\mathbf u)=(I^*, \boldsymbol \nu^*)$, $\omega^{\dagger}(s,\mathbf u)=(I^{\dagger}, \boldsymbol \nu^{\dagger})$, $T^*_{(s,\mathbf u)}\mathbf h\triangleq\sum_{\substack{n\in \mathcal N, c\in \mathcal C}}u^{(c)}_n+\sum_{s',\mathbf u'}P_{(s,\mathbf u),(s',\mathbf u')}(I^*, \boldsymbol \nu^*)h(s',\mathbf u')$, and $T^{\dagger}_{(s,\mathbf u)}\mathbf h\triangleq\sum_{\substack{n\in \mathcal N,c\in \mathcal C}}u^{(c)}_n+\sum_{s',\mathbf u'}P_{(s,\mathbf u),(s',\mathbf u')}(I^{\dagger}, \boldsymbol \nu^{\dagger})h(s',\mathbf u')$, where $\mathbf h\triangleq\left(h(s,\mathbf u)\right)$. 
By \eqref{eqn:proof:Taylor-exp-} and \eqref{eqn:proof-prob-opt}, we have: 
\footnote{Equality \eqref{eqn:proof-opt-asym-relation}  is due to the following. Let $f_1(x)$ and $f_2(x)$ be two functions of $x$. Let $x^*\triangleq \arg\min_x (f_1(x)+f_2(x))$ and $x^{\dagger}\triangleq\arg\min_x f_1(x)$. Then, we have $f_1(x^{\dagger})+f_2(x^*)\leq f_1(x^*)+f_2(x^*)\leq f_1(x^{\dagger})+f_2(x^{\dagger})$.} 
\begin{align}
&\sum_{\mathbf u'\in \mathcal U}P_{(s,\mathbf u),\mathbf u'}(I^*, \boldsymbol \nu^*)V(\mathbf u')\nonumber\\
=&\sum_{\mathbf u'\in \mathcal U}P_{(s,\mathbf u),\mathbf u'}(I^{\dagger}, \boldsymbol \nu^{\dagger})V(\mathbf u')+o(\Delta)\label{eqn:proof-opt-asym-relation}\\
\stackrel{(a)}
{\Rightarrow} &T^*_{(s,\mathbf u)}\mathbf h=T^{\dagger}_{(s,\mathbf u)}\mathbf h+o(\Delta)\nonumber\\
\stackrel{(b)}
{\Rightarrow}  & d(\Delta)+h(s,\mathbf u)=T^{\dagger}_{(s,\mathbf u)}\mathbf h+o(\Delta)\nonumber\\
\Rightarrow &T^{\dagger}_{(s,\mathbf u)}\mathbf h=d(\Delta)+h(s,\mathbf u)+o(\Delta), \ (s,\mathbf u)\in \mathcal S\times \mathcal U,\label{proof:new-perf-asymp}
\end{align}
where (a) is due to \eqref{eqn:v-def} and \eqref{eqn:tran-prob}, and (b) is due to \eqref{eqn:Bellman-full}. 
As in the proofs of Proposition 4.1.6 in \cite[pp. 191]{Bertsekas:2007} and Proposition 4.6.1 in \cite[pp. 254]{Bertsekas:2007}, from \eqref{proof:new-perf-asymp}, we can show $ d^{\dagger}(\Delta)=d(\Delta)+o (\Delta)$ as $\Delta \to 0$.

\section*{Appendix C: Proof of Lemma~\ref{Lem:asym-opt-policy-GQSI}}

Let $\omega^{\dagger}(s,\mathbf u)=(I^{\dagger}, \boldsymbol \nu^{\dagger})$. By \eqref{eqn:Bellman}, \eqref{eqn:proof:Taylor-exp-} and \eqref{eqn:proof-opt-asym-relation}, for all $\mathbf u\in \mathcal U$, we have:  
\begin{align}
d =\sum_{n\in \mathcal N , c\in \mathcal C}u^{(c)}_n+\Delta\sum_{n\in \mathcal N , c\in \mathcal C} V_{n,(c)}'(\mathbf u)F_{n,(c)}(\mathbf u)+o(\Delta)\label{proof:ODE}
\end{align}
where 
\begin{align}
F_{n,(c)}(\mathbf u)=\lambda_n^{(c)}+\sum_{s\in \mathcal S}\Pr[S=s]\left(\sum_{a\in \mathcal N}\nu^{(c)\dagger}_{an}-\sum_{b\in \mathcal N}\nu^{(c)\dagger}_{nb}\right).\nonumber
\end{align}
Suppose for all $n\in \mathcal N$ and $c\in \mathcal C$,  there exists a function $g_n^{(c)}(u_n^{(c)})$, such that $V_{n,(c)}'(\mathbf u)=g_n^{(c)}(u_n^{(c)})$. By \eqref{eqn:prob-asymp}, we know that $F_{n,(c)}(\mathbf u)$   is still a function of the global QSI $\mathbf u$. In addition, by \eqref{proof:ODE} and $V_{n,(c)}'(\mathbf u)=g_n^{(c)}(u_n^{(c)})$, we have:    
\begin{align}
g_{k,(c')}(u_{k}^{(c')})=&\frac{d-\sum_{\substack{n\in \mathcal N\\c\in \mathcal C}}u^{(c)}_n+o(\Delta)}{\Delta F_{k,(c')}(\mathbf u)}\nonumber\\
&-\frac{\Delta\sum_{\substack{n\in \mathcal N, c\in \mathcal C\\ (n,c)\neq(k,c')}} g_{n,(c)}(u_n^{(c)})F_{n,(c)}(\mathbf u)}{\Delta F_{k,(c')}(\mathbf u)}.\nonumber 
\end{align}
For this equality, the R.H.S.   is a function of  $\mathbf u$ and the L.H.S.   is a function of $u_{k}^{(c')}$. Thus, the above equality cannot hold. By contradiction, we can show  that for  some $n\in \mathcal N$ and $c\in \mathcal C$, there does not exist a function $g_n^{(c)}(u_n^{(c)})$, such that $V_{n,(c)}'(\mathbf u)=g_n^{(c)}(u_n^{(c)})$. 
Thus, we can always find  $\mathbf u\in  \mathcal U$, such that $c^{\dagger}_{ab}(\mathbf u)\neq c^{\ddagger}_{ab}(\mathbf u)$ and $\delta V_{ab}^{\dagger}(\mathbf u)\neq \delta V_{ab}^{\ddagger}(\mathbf u)$ for some $(a,b)\in \mathcal L$, where $c^{\ddagger}_{ab}(\mathbf u)$ and $\delta V_{ab}^{\ddagger}(\mathbf u)$ are defined in a similar way to $c^{\dagger}_{ab}(\mathbf u)$ and $\delta V_{ab}^{\dagger}(\mathbf u)$
but with  $u_n^{(c)}$ instead of $V_{n,(c)}'(\mathbf u)$. Therefore, we can find  some $(s,\mathbf u)\in \mathcal S \times  \mathcal U$ such that $\omega^{\dagger}(s,\mathbf u)\neq \omega^{\ddagger}(s,\mathbf u)$.

\section*{Appendix D: Proof of Lemma~\ref{Lem:asym-DBP}}

Let $\omega^{\ddagger}(s,\mathbf u)=(I^{\ddagger}, \boldsymbol \nu^{\ddagger})$ and $T^{\ddagger}_{(s,\mathbf u)}\mathbf h\triangleq\sum_{\substack{n\in \mathcal N\\c\in \mathcal C}}u^{(c)}_n+\sum_{s',\mathbf u'}P_{(s,\mathbf u),(s',\mathbf u')}(I^{\ddagger}, \boldsymbol \nu^{\ddagger})h(s',\mathbf u')$.  By \eqref{eqn:proof:Taylor-exp-}, \eqref{eqn:prob-asymp} and \eqref{eqn:prob-DBP}, we know $T^{\ddagger}_{(s,\mathbf u)}\mathbf h\geq T^{\dagger}_{(s,\mathbf u)}\mathbf h+o(\Delta)$ for all  $ (s,\mathbf u)\in \mathcal S\times \mathcal U$. In addition, by  Lemma ~\ref{Lem:asym-opt-policy-GQSI}, we know that there exist $\epsilon >0$ and $ (s,\mathbf u)\in \mathcal S\times \mathcal U$,  such that  $T^{\ddagger}_{(s,\mathbf u)}\mathbf h\geq T^{\dagger}_{(s,\mathbf u)}\mathbf h+o(\Delta)+\epsilon$. Combining \eqref{proof:new-perf-asymp}, we have for all  $ (s,\mathbf u)\in \mathcal S\times \mathcal U$,  $T^{\ddagger}_{(s,\mathbf u)}\mathbf h\geq d(\Delta)+h(s,\mathbf u)+o(\Delta)$, and for some $ (s,\mathbf u)\in \mathcal S\times \mathcal U$, $T^{\ddagger}_{(s,\mathbf u)}\mathbf h\geq  d(\Delta)+h(s,\mathbf u)+o(\Delta)+\epsilon$.   As in the proofs of Propositions 4.1.6 in \cite[pp. 191]{Bertsekas:2007} and  4.6.1 in \cite[pp. 254]{Bertsekas:2007}, we can show $d^{\ddagger}(\Delta)\geq d(\Delta)+o(\Delta)+\epsilon$. Thus, by Lemma~\ref{Lem:asym-opt-policy}, we have $d^{\ddagger}(\Delta)-d^{\dagger}(\Delta)\geq \epsilon+o(\Delta)$ as $\Delta \to 0$.

\section*{Appendix E: Proof of Theorem~\ref{Thm:thpt-opt}}

Define the Lyapunov function $L(\mathbf u)\triangleq
\sum_{n\in \mathcal N,c\in \mathcal C}(u^{(c)}_n)^2$. The Lyapunov drift at slot $t$ is
$\Delta (\mathbf U(t))\triangleq \mathbb E[L\big(\mathbf
U(t+1)\big)-L\big(\mathbf U(t)\big)|\mathbf U(t)]$.  Squaring both sides of \eqref{eqn:queue_dyn} and following steps similar to those in \cite{Georgiadis-Neely-Tassiulas:2006}, we have:
\footnote{Note that $\mu^{(c)}_{ab}(t)$ denotes the action of Algorithm~\ref{Alg:enhanced-DBP}.} 
\begin{align}
&L\left(\mathbf U(t+1)\right)-L\left(\mathbf U(t)\right)\nonumber\\
&\leq 2N\bar B+2\sum_{n\in \mathcal N,c\in \mathcal C}U^{(c)}_n(t)A^{(c)}_n(t)-2\sum_{(a,b)\in \mathcal L}\sum_{c\in \mathcal C}
\mu^{(c)}_{ab}(t)\nonumber\\
&\quad \times\left(\left(U^{(c)}_a(t)+f_a^{(c)}(\mathbf U(t))\right)-\left(U^{(c)}_b(t)+f_b^{(c)}(\mathbf U(t))\right)\right)\nonumber\\
&\quad +2\sum_{(a,b)\in \mathcal L}\sum_{c\in \mathcal C}
\mu^{(c)}_{ab}(t)\left( f_a^{(c)}(\mathbf U(t))- f_b^{(c)}(\mathbf U(t))\right)\nonumber\\
& \stackrel{(a)}{\leq} 2N\bar B+2\sum_{n\in \mathcal N,c\in \mathcal C}U^{(c)}_n(t)A^{(c)}_n(t)-2\sum_{(a,b)\in \mathcal L}\sum_{c\in \mathcal C}
\mu^{(c)}_{ab}(t)\nonumber\\
& \quad\times\left(\left(U^{(c)}_a(t)+f_a^{(c)}(\mathbf U(t))\right)-\left(U^{(c)}_b(t)+f_b^{(c)}(\mathbf U(t))\right)\right)\nonumber\\
& \quad +2\sum_{n\in \mathcal N,c\in \mathcal C}U^{(c)}_n(t)\frac{R_{\max}L^{(c)}}{z^{(c)}_n}\label{eqn:proof-deltaL}
\end{align}
where (a) is due to the following:
\begin{align}
&\sum_{(a,b)\in \mathcal L}\sum_{c\in \mathcal C}
\mu^{(c)}_{ab}(t)\left( f_a^{(c)}(\mathbf U(t))- f_b^{(c)}(\mathbf U(t))\right)\nonumber\\
&= \sum_{(a,b)\in \mathcal L}\sum_{c\in \mathcal C}\mu^{(c)}_{ab}(t)\sum_{k\in \mathcal N}\frac{\eta_{ak}^{(c)}(\mathbf U(t))-\eta_{bk}^{(c)}(\mathbf U(t))}{z_{k}^{(c)}} U_{k}^{(c)}(t)\nonumber\\
&\leq \sum_{c\in \mathcal C}\sum_{(a,b)\in \mathcal L^{(c)}} R_{\max}\sum_{k\in \mathcal N}\frac{1}{z_{k}^{(c)}} U_{k}^{(c)}(t)\nonumber\\
&=\sum_{n\in \mathcal N,c\in \mathcal C}U^{(c)}_n(t)\frac{R_{\max}L^{(c)}}{z^{(c)}_n}\label{eqn:proof-bound-fab}.
\end{align}
Taking conditional
expectations on both sides of \eqref{eqn:proof-deltaL}, we have: 
\begin{align}
&\Delta \left(\mathbf
U(t)\right) \nonumber\\
&\stackrel{(b)}{\leq}
2N\bar B+2\sum_{n\in \mathcal N,c\in \mathcal C}U^{(c)}_n(t)\lambda^{(c)}_n-2\mathbb
E\Bigg[\sum_{(a,b)\in \mathcal L}\sum_{c\in \mathcal C}\tilde{\mu}^{(c)}_{ab}(t)
\nonumber\\
& \times\left(\left(U^{(c)}_a(t)+f_a^{(c)}(\mathbf U(t))\right)-\left(U^{(c)}_b(t)+f_b^{(c)}(\mathbf U(t))\right)\right)\Big|\mathbf U(t)\Bigg]\nonumber\\
&+2\sum_{n\in \mathcal N,c\in \mathcal C}U^{(c)}_n(t)\frac{R_{\max}L^{(c)}}{z^{(c)}_n}\nonumber\\
&=
2N\bar B+2\sum_{n\in \mathcal N,c\in \mathcal C}U^{(c)}_n(t)\lambda^{(c)}_n\nonumber\\
&-2\sum_{n\in \mathcal N,c\in \mathcal C}U^{(c)}_n(t)\mathbb E\left[\left(\sum_{b\in \mathcal N}\tilde{\mu}^{(c)}_{nb}(t)-\sum_{a\in \mathcal N}\tilde{\mu}^{(c)}_{an}(t)\right)\Big|\mathbf
U(t)\right]\nonumber\\
&+2\sum_{(a,b)\in \mathcal L}\sum_{c\in \mathcal C} \mathbb E\left[\tilde{\mu}^{(c)}_{ab}(t) \left( f_b^{(c)}(\mathbf U(t))-f_a^{(c)}(\mathbf U(t))\right)\Big|\mathbf U(t)\right]\nonumber\\
&+2\sum_{n\in \mathcal N,c\in \mathcal C}U^{(c)}_n(t)\frac{R_{\max}L^{(c)}}{z^{(c)}_n}\label{eqn:proof_ineq0}
\end{align}
where (b) is due to the fact that Algorithm~\ref{Alg:enhanced-DBP} minimizes the
R.H.S. of  (b) over all possible alternative
actions  $\tilde{\mu}^{(c)}_{ab}(t)$. 
Since $\boldsymbol
\lambda+\boldsymbol \epsilon+\boldsymbol \delta \in \Lambda$, by Corollary 3.9 of \cite{Georgiadis-Neely-Tassiulas:2006}, there exists a stationary
randomized policy that makes decisions based only on $S(t)$ (i.e.
independent of $\mathbf U(t)$) such that
\begin{align}
\mathbb E\left[\left(\sum_{b\in \mathcal N}\tilde{\mu}^{(c)}_{nb}(t)-\sum_{a\in \mathcal N}\tilde{\mu}^{(c)}_{an}(t)\right)\Big|\mathbf
U(t)\right]=\epsilon^{(c)}_n+\delta_n^{(c)}+\lambda^{(c)}_n\label{eqn:proof_ineq1}.
\end{align}
On the other hand, similar to \eqref{eqn:proof-bound-fab}, we have: 
\begin{align}
&\mathbb  E\left[\sum_{(a,b)\in \mathcal L}\sum_{c\in \mathcal C} \tilde{\mu}^{(c)}_{ab}(t) \left( f_b^{(c)}(\mathbf U(t))-f_a^{(c)}(\mathbf U(t))\right)\Big|\mathbf U(t)\right]\nonumber\\
& \leq \sum_{n\in \mathcal N,c\in \mathcal C}U^{(c)}_n(t)\frac{R_{\max}L^{(c)}}{z^{(c)}_n}\label{eqn:proof_ineq2}.
\end{align}
Substituting \eqref{eqn:proof_ineq1} and \eqref{eqn:proof_ineq2}
into \eqref{eqn:proof_ineq0}, we have $\Delta (\mathbf U(t))
\leq
2N\bar B-2\min_{ n\in \mathcal
N,c\in \mathcal C}\left\{\epsilon_n^{(c)}+\delta_n^{(c)}
-\frac{2R_{\max}L^{(c)}}{z_n^{(c)}}\right\}\times\sum_{n\in \mathcal N,c\in \mathcal C}U^{(c)}_n(t) $. 
By  Lemma 4.1 of \cite{Georgiadis-Neely-Tassiulas:2006} and by minimizing the upper bound over all possible $(\boldsymbol \epsilon,\boldsymbol \delta )$, we complete the proof of Theorem~\ref{Thm:thpt-opt}.

\section*{Appendix F: Proof of Sufficient Condition for $z$}

Using the proof of Theorem~\ref{Thm:thpt-opt}, we replace \eqref{eqn:proof-bound-fab} with \eqref{eqn:proof-bound-fab-nxt} to show that for all $z$ satisfying \eqref{eqn:cond-z-ex},
BPnxt stabilizes the network for any $\boldsymbol \lambda$ satisfying $\boldsymbol \lambda+\boldsymbol \epsilon \in \text{int}(\Lambda)$:
\begin{align}
&\sum_{(a,b)\in \mathcal L}\sum_{c\in \mathcal C}
\mu^{(c)}_{ab}(t)\left( f_a^{(c)}\left(\mathbf U_a^{(c)}(t)\right)- f_b^{(c)}\left(\mathbf U_b^{(c)}(t)\right)\right)\nonumber\\
& \leq\sum_{(a,b)\in \mathcal L}\sum_{c\in \mathcal C}
\mu^{(c)}_{ab}(t)\frac{H_a^{*(c)}\left(\mathbf U_a^{(c)}(t)\right)}{z}\nonumber\\
& \leq \sum_{(a,b)\in \mathcal L}\sum_{c\in \mathcal C}\mu^{(c)}_{ab}(t)\frac{\sum_{k\in \mathcal N_{out,a}^{(c)}} U_{k}^{(c)}(t)}{N_{out,a}^{(c)}z}\nonumber\\
& \leq \sum_{c\in \mathcal C}\sum_{(a,b)\in \mathcal L^{(c)}} R_{\max}\frac{\sum_{k\in \mathcal N_{out,a}^{(c)}} U_{k}^{(c)}(t)}{N_{out,a}^{(c)}z}\nonumber\\
&= \sum_{c\in \mathcal C}\sum_{a\in\mathcal N}\sum_{b\in \mathcal N_{out,a}^{(c)} }R_{\max}\frac{\sum_{k\in \mathcal N_{out,a}^{(c)}} U_{k}^{(c)}(t)}{N_{out,a}^{(c)}z}\nonumber\\
&=\sum_{c\in \mathcal C}\sum_{a\in\mathcal N}\sum_{k\in \mathcal N_{out,a}^{(c)}} U_{k}^{(c)}(t)\frac{R_{\max}}{z}\nonumber\\
& =\sum_{n\in \mathcal N,c\in \mathcal C}U^{(c)}_n(t)\frac{R_{\max} N_{in,n}^{(c)}}{z}\nonumber\\
&\leq\sum_{n\in \mathcal N,c\in \mathcal C}U^{(c)}_n(t)\frac{R_{\max}d_{in}}{z}.\label{eqn:proof-bound-fab-nxt}
\end{align}Here, 
$\mathcal N_{out,n}^{(c)}\triangleq \{k:(n,k)\in\mathcal L^{(c)}\}$, $ N_{out,n}^{(c)}\triangleq|\mathcal N_{out,n}^{(c)}|$, $\mathcal N_{in,n}^{(c)}\triangleq \{k:(k,n)\in\mathcal L^{(c)}\}$, $ N_{in,n}^{(c)}\triangleq|\mathcal N_{in,n}^{(c)}|$, and $d_{in}=\max_{n\in \mathcal N, c\in \mathcal C}N_{in,n}^{(c)}$. 

Similarly, we replace \eqref{eqn:proof-bound-fab} with \eqref{eqn:proof-bound-fab-min} to show that for all $z$ satisfying \eqref{eqn:cond-z-ex},
BPmin stabilizes the network for any $\boldsymbol \lambda$ satisfying $\boldsymbol \lambda+\boldsymbol \epsilon \in \text{int}(\Lambda)$:
\begin{align}
&\sum_{(a,b)\in \mathcal L}\sum_{c\in \mathcal C}
\mu^{(c)}_{ab}(t)\left( f_a^{(c)}\left(\mathbf U(t)\right)- f_b^{(c)}\left(\mathbf U(t)\right)\right)\nonumber\\
&  \leq\sum_{(a,b)\in \mathcal L}\sum_{c\in \mathcal C}
\mu^{(c)}_{ab}(t)\frac{T_b^{*(c)}(\mathbf U(t))+U_{b}^{(c)}(t)-T_b^{*(c)}(\mathbf U(t))}{z}\nonumber\\
& = \sum_{(a,b)\in \mathcal L}\sum_{c\in \mathcal C}\mu^{(c)}_{ab}(t)\frac{ U_{b}^{(c)}(t)}{z}\nonumber\\
&
{\leq} 
\sum_{c\in \mathcal C}\sum_{(a,b)\in \mathcal L^{(c)}} R_{\max}\frac{  U_{b}^{(c)}(t)}{z}\nonumber\\
&=\sum_{c\in \mathcal C}\sum_{a\in\mathcal N}\sum_{b\in \mathcal N_{out,a}^{(c)} }R_{\max}\frac{ U_{b}^{(c)}(t)}{z}\nonumber\\
& =\sum_{n\in \mathcal N,c\in \mathcal C}U^{(c)}_n(t)\frac{R_{\max} N_{in,n}^{(c)}}{z}\nonumber\\
&\leq\sum_{n\in \mathcal N,c\in \mathcal C}U^{(c)}_n(t)\frac{R_{\max}d_{in}}{z}.\label{eqn:proof-bound-fab-min}
\end{align}

\section*{Appendix G: Proof of Theorem~\ref{Thm:flow-control}}

Define the Lyapunov function $L(\boldsymbol \theta)\triangleq\frac{1}{2}
\sum_{n\in \mathcal N,c\in \mathcal C}\left((u^{(c)}_n)^2+(y^{(c)}_n)^2\right)$, where $\boldsymbol \theta\triangleq (\mathbf u, \mathbf y)$. Denote $\boldsymbol \Theta(t)\triangleq (\mathbf U(t), \mathbf Y(t))$.  The Lyapunov drift at slot $t$ is
$\Delta (\boldsymbol \Theta(t))\triangleq \mathbb E[L\big(\boldsymbol \Theta(t+1)\big)-L\left(\boldsymbol \Theta(t)\right)|\boldsymbol \Theta(t)]$.  
Squaring both sides of \eqref{eqn:queue_dyn-flow} and \eqref{eqn:queue_dyn-virtual} and following steps similar to those in \cite{Georgiadis-Neely-Tassiulas:2006}, we have:
\footnote{Note that $r^{(c)}_n (t)$, $\gamma^{(c)}_n(t)$ and $\mu^{(c)}_{ab}(t)$ 
denote the actions of Algorithm~\ref{Alg:enhanced-flow-DBP}.} 
\begin{align}
&L\left(\boldsymbol \Theta (t+1)\right)-L\left(\boldsymbol \Theta (t)\right) \nonumber\\
&\leq   N\hat B+ \sum_{n\in \mathcal N,c\in \mathcal C}U^{(c)}_n(t)r^{(c)}_n(t)- \sum_{(a,b)\in \mathcal L}\sum_{c\in \mathcal C}
\mu^{(c)}_{ab}(t)\nonumber\\
&\quad \times\left(\left(U^{(c)}_a(t)+f_a^{(c)}(\mathbf U(t))\right)-\left(U^{(c)}_b(t)+f_b^{(c)}(\mathbf U(t))\right)\right)\nonumber\\
&\quad +  \sum_{(a,b)\in \mathcal L}\sum_{c\in \mathcal C}
\mu^{(c)}_{ab}(t)\left( f_a^{(c)}(\mathbf U(t))- f_b^{(c)}(\mathbf U(t))\right)\nonumber\\
& \quad - \sum_{n\in \mathcal N,c\in \mathcal C}Y_n^{(c)}(t)\left(r_n^{(c)}(t)-\gamma_n^{(c)}(t)\right)\nonumber\\
& \stackrel{(a)}{\leq} N\hat B+ \sum_{n\in \mathcal N,c\in \mathcal C}U^{(c)}_n(t)r^{(c)}_n(t)- \sum_{(a,b)\in \mathcal L}\sum_{c\in \mathcal C}
\mu^{(c)}_{ab}(t)\nonumber\\
&\quad \times\left(\left(U^{(c)}_a(t)+f_a^{(c)}(\mathbf U(t))\right)-\left(U^{(c)}_b(t)+f_b^{(c)}(\mathbf U(t))\right)\right)\nonumber\\
&\quad +  \sum_{n\in \mathcal N,c\in \mathcal C}U^{(c)}_n(t)\frac{R_{\max}L^{(c)}}{z^{(c)}_n}\nonumber\\
& \quad - \sum_{n\in \mathcal N,c\in \mathcal C}Y_n^{(c)}(t)\left(r_n^{(c)}(t)-\gamma_n^{(c)}(t)\right)\label{eqn:proof-drift-flow}
\end{align}
where (a) is due to \eqref{eqn:proof-bound-fab}.  Taking conditional
expectations and subtracting $M\mathbb E\left[\sum_{n\in \mathcal N,c\in \mathcal C}h^{(c)}_n\left( \gamma^{(c)}_n(t)\right)\Big|\boldsymbol \Theta (t)\right]$ from both sides of \eqref{eqn:proof-drift-flow}, we have: 
\begin{align}
&\Delta \left(\boldsymbol \Theta (t)\right )-M\mathbb E\left[\sum_{n\in \mathcal N,c\in \mathcal C}h^{(c)}_n\left( \gamma^{(c)}_n(t)\right)\Big|\boldsymbol \Theta(t)\right]\nonumber\\
& \stackrel{(b)}{\leq}  N\hat B- \sum_{n\in \mathcal N,c\in \mathcal C}\left(Y^{(c)}_n(t)-U^{(c)}_n(t)\right)\mathbb E\left[\tilde r^{(c)}_n(t)\Big |\boldsymbol \Theta(t)\right]\nonumber\\
& \quad -\sum_{n\in \mathcal N,c\in \mathcal C}\mathbb E\left[Mh^{(c)}_n\left( \tilde \gamma^{(c)}_n(t)\right)- Y_n^{(c)}(t)\tilde \gamma^{(c)}_n(t)\Big|\boldsymbol \Theta(t)\right]\nonumber\\
& \quad -  \sum_{n\in \mathcal N,c\in \mathcal C}U^{(c)}_n(t)\mathbb E\left[\left(\sum_{b\in \mathcal N}\tilde{\mu}^{(c)}_{nb}(t)-\sum_{a\in \mathcal N}\tilde{\mu}^{(c)}_{an}(t)\right)\Big|\boldsymbol \Theta(t)\right]\nonumber\\
& \quad + \sum_{(a,b)\in \mathcal L}\sum_{c\in \mathcal C} \mathbb E\left[\tilde{\mu}^{(c)}_{ab}(t) \left( f_b^{(c)}(\mathbf U(t))-f_a^{(c)}(\mathbf U(t))\right)\Big|\boldsymbol \Theta(t)\right]\nonumber\\
& \quad 
+  \sum_{n\in \mathcal N,c\in \mathcal C}U^{(c)}_n(t)\frac{R_{\max}L^{(c)}}{z^{(c)}_n}
\label{eqn:proof_ineq0-flow}
\end{align}
where (b) is due to the fact that  Algorithm~\ref{Alg:enhanced-flow-DBP} minimizes the
R.H.S. of  (b) over all possible alternative
$\tilde r^{(c)}_n (t)$, $\tilde \gamma^{(c)}_n(t)$ and $\tilde \mu^{(c)}_{ab}(t)$. 
It is not difficult to construct alternative random policies that choose $\tilde r^{(c)}_n (t)$, $\tilde \gamma^{(c)}_n(t)$, $\tilde \mu^{(c)}_{ab}(t)$ such that
\begin{align}
&\mathbb E\left[\tilde r^{(c)}_n(t)\Big|\boldsymbol \Theta(t)\right]=\overline r^{*(c)}_n(\boldsymbol \epsilon+\boldsymbol \delta)\label{eqn:rand-r}\\
& \tilde \gamma^{(c)}_n(t)=\overline r^{*(c)}_n(\boldsymbol \epsilon+\boldsymbol \delta)\label{eqn:rand-gamma}\\
& \mathbb E\left[\left(\sum_{b\in \mathcal N}\tilde{\mu}^{(c)}_{nb}(t)-\sum_{a\in \mathcal N}\tilde{\mu}^{(c)}_{an}(t)\right)\Big|\boldsymbol \Theta(t)\right]\nonumber\\
& = \overline r^{*(c)}_n(\boldsymbol \epsilon+\boldsymbol \delta)+\epsilon^{(c)}_n+\delta_n^{(c)}\label{eqn:rand-mu}
\end{align}
where  $ \overline{\mathbf r}^*(\boldsymbol \epsilon+\boldsymbol \delta)=\left(\overline r^{*(c)}_n(\boldsymbol \epsilon+\boldsymbol \delta)\right)$ is the target $(\boldsymbol \epsilon+\boldsymbol \delta)$-optimal admitted rate given by
\eqref{eqn:eps-opt-prob}.
\footnote{Specifically, \eqref{eqn:rand-r} can be achieved by the randomized policy which sets $\tilde r^{(c)}_n(t)=A^{(c)}_n(t)$ with probability $\overline r^{*(c)}_n(\boldsymbol \epsilon+\boldsymbol \delta)/\lambda^{(c)}_n$ and $\tilde r^{(c)}_n(t)=0$ with probability $1-\overline r^{*(c)}_n(\boldsymbol \epsilon+\boldsymbol \delta)/\lambda^{(c)}_n$.} 
Equation \eqref{eqn:rand-mu} follows from the same arguments leading to~\eqref{eqn:proof_ineq1}.
Thus, by \eqref{eqn:rand-r}, \eqref{eqn:rand-gamma}, \eqref{eqn:rand-mu} and \eqref{eqn:proof_ineq2}, from \eqref{eqn:proof_ineq0-flow}, we obtain $\Delta (\boldsymbol \Theta (t))-M\mathbb E\left[\sum_{n\in \mathcal N,c\in \mathcal C}h^{(c)}_n\left( \gamma^{(c)}_n(t)\right)\Big|\boldsymbol \Theta(t)\right]\leq   N\hat B- \min_{n\in \mathcal N,c\in \mathcal C}\left\{\epsilon_n^{(c)}+\delta_n^{(n)}-\frac{2R_{\max}L^{(c)}}{z_n^{(c)}}\right\} \sum_{n\in \mathcal N,c\in \mathcal C}U_n^{(c)}(t)-M\sum_{n\in \mathcal N,c\in \mathcal C}h^{(c)}_n\left(\overline r^{*(c)}_n(\boldsymbol \epsilon+\boldsymbol \delta)\right)$. 
Applying Theorem 5.4 of \cite{Georgiadis-Neely-Tassiulas:2006}, we have: 
\begin{align}
&\limsup_{t\to\infty}\frac{1}{t}\sum_{\tau=0}^{t-1}\sum_{n\in \mathcal N,c\in \mathcal C} \mathbb
E[U^{(c)}_n(\tau)]\nonumber\\
& \leq \frac{ N\hat B+MH_{\max}}{
 \min_{n\in \mathcal N,c\in \mathcal C}\left\{\epsilon_n^{(c)}+\delta_n^{(c)}-\frac{2R_{\max}L^{(c)}}{z_n^{(c)}}\right\}}\label{eqn:proof-enhanced-flow-DBP-U}\\
&\liminf_{t\to\infty}\sum_{n\in \mathcal N,c\in \mathcal C} h^{(c)}_n\left(\overline \gamma^{(c)}_n(t)\right)\nonumber\\
& \geq
\sum_{n\in \mathcal N,c\in \mathcal C}
h^{(c)}_n\left( \overline r^{*(c)}_n\left(\boldsymbol \epsilon+\boldsymbol \delta\right)\right)-\frac{ N\hat B}{M}\label{eqn:proof-enhanced-flow-DBP-g}
\end{align}
where $\overline \gamma^{(c)}_n(t)\triangleq \frac{1}{t}\sum_{\tau=0}^{t-1}\mathbb E[ \gamma^{(c)}_n(\tau)]$. It is easy to prove $\overline \gamma^{(c)}_n(t)\leq \overline r^{(c)}_n(t)$ by showing the stability of the virtual queues. 
As in \cite[pp. 88]{Georgiadis-Neely-Tassiulas:2006}, we optimize the  R.H.S.s of  \eqref{eqn:proof-enhanced-flow-DBP-U} and \eqref{eqn:proof-enhanced-flow-DBP-g} over all possible $(\boldsymbol \epsilon,\boldsymbol \delta)$. Thus, we can show \eqref{eqn:enhanced-flow-DBP-U} and  \eqref{eqn:enhanced-flow-DBP-g}.

\end{document}